\theoremstyle:=definition,remark,plain\do{%
        \expandafter\g@addto@macro\csname th@\theoremstyle\endcsname{%
            \addtolength\thm@preskip\parskip
            }%
        }
\definecolor{dnrbl}{rgb}{0,0,0.3}
\definecolor{dnrgr}{rgb}{0,0.3,0}
\definecolor{dnrre}{rgb}{0.5,0,0}
\theoremstyle{plain}
\newtheorem{thm}{Theorem}[section]
\newtheorem{lem}[thm]{Lemma}
\newtheorem{prob}[thm]{Problem}
\newtheorem{defi}[thm]{Definition}
\let\c@table\c@figure
\newcommand{\omux}{$\Omega_U(X)$ }
\newcommand{\omuxn}{$\Omega_U(X)$}
\newcommand{\omuxf}{\Omega_U(X)}
\newcommand{\Nat}{\mathbb{N}}
\newcommand{\de}{\downarrow} 
\newcommand{\DD}{\mathcal{D}}
\newcommand{\QQ}{\mathbb{Q}^+}
\newcommand{\FSW}{Figueira, Stephan, and Wu\ } 
\newcommand{\MN}{Miller and Nies\ }
\newcommand{\BFGM}{Becher, Figueira, Grigorieff, and Miller\ }
\newcommand{\KS}{Ku{\v{c}}era and Slaman\ }
\newcommand{\CHKW}{Calude, Hertling, Khoussainov and Wang\ }
\newcommand{\DHN}{Downey, Hirschfeldt and Nies\ }
\renewcommand{\DH}{Downey and Hirschfeldt\ }
\newcommand{\CC}{\mathcal{C}}
\newcommand{\DHNS}{Downey, Hirschfeldt, Nies and Stephan } 
\newcommand{\ml}{Martin-L\"{o}f }
\newcommand{\pz}{$\Pi^0_1$\ }
\newcommand{\ie}{i.e.\ }
\newcommand{\ce}{c.e.\ }
\newcommand{\dce}{d.c.e.\ }
\newcommand{\lce}{left-c.e.\ }
\newcommand{\rce}{right-c.e.\ }
\newcommand{\pf}{prefix-free }
\renewenvironment{abstract}
 { \normalsize
  \list{}{
    \setlength{\leftmargin}{.0cm}%
    \setlength{\rightmargin}{\leftmargin}%
    }%
  \item {\bf \abstractname.} \relax}
 {\endlist}
\title{Differences of halting probabilities
\thanks{Barmpalias was supported by the 
1000 Talents Program for Young Scholars from the Chinese Government, grant no.\ D1101130.
Additional support was received by
the Chinese Academy of Sciences (CAS) and the Institute of Software of the CAS.
Lewis-Pye was supported by a Royal Society University 
Research Fellowship.}}
\author{George Barmpalias  \and Andrew Lewis-Pye}
\date{\today}
\begin{document}
\maketitle
\begin{abstract}
The halting probabilities of universal \pf machines are universal for the class of reals with
computably enumerable left cut (also known as \lce reals), 
and coincide with the \ml random elements of this class.
We study the differences of  \ml random \lce reals and show that for each pair of such reals 
$\alpha,\beta$ there exists a unique number $r>0$ such that 
$q\alpha-\beta$ is a \ml random \lce real
for each positive rational $q>r$  and a \ml random 
\rce real for each positive rational $q<r$.
Based on this result we develop a theory of differences of halting probabilities, which answers a number
of questions about \ml random \lce reals, 
including one of the few remaining open problems from the list
of open questions in algorithmic randomness \cite{MR2248590}.

The halting probability of a prefix-free machine $M$ restricted to a set $X$
is the probability that the machine halts and outputs an element of $X$.
These numbers $\Omega_M(X)$ were studied in
\cite{DBLP:journals/jsyml/BecherG05,jsyml/BecherFGM06,tcs/BecherG07,jsyml/BecherG09}
as a way to obtain concrete highly random numbers.
When $X$ is a \pz set, the number $\Omega_M(X)$ 
is the difference of two halting probabilities.
\BFGM asked whether $\Omega_U(X)$ is \ml random when $U$ is 
universal and $X$ is a \pz set. This problem has resisted numerous attempts 
\cite{DBLP:journals/jsyml/BecherG05,jsyml/BecherFGM06,jc/FigueiraSW06}.
We apply our theory of differences of halting probabilities to 
give a positive answer,  and show that $\Omega_U(X)$ is a \ml random \lce real
whenever $X$ is a nonempty \pz set.
\end{abstract}
\vspace*{\fill}
\noindent{\bf George Barmpalias}\\[0.5em]
\noindent
State Key Lab of Computer Science, 
Institute of Software, Chinese Academy of Sciences, Beijing, China.
School of Mathematics, Statistics and Operations Research,
Victoria University of Wellington, New Zealand.\\[0.2em] 
\textit{E-mail:} \texttt{\textcolor{dnrgr}{barmpalias@gmail.com}}.
\textit{Web:} \texttt{\textcolor{dnrre}{http://barmpalias.net}}\par
\addvspace{\medskipamount}\medskip\medskip
\noindent{\bf Andrew Lewis-Pye}\\[0.5em]  
\noindent Department of Mathematics,
Columbia House, London School of Economics, 
Houghton Street, London, WC2A 2AE, United Kingdom.\\[0.2em]
\textit{E-mail:} \texttt{\textcolor{dnrgr}{A.Lewis7@lse.ac.uk.}}
\textit{Web:} \texttt{\textcolor{dnrre}{http://aemlewis.co.uk}} 

\vfill \thispagestyle{empty}
\clearpage

\section{Introduction}
Perhaps the most recognizable algorithmically random number is Chaitin's $\Omega$ number.
This is the probability that a universal prefix-free machine  $U$ halts 
when we feed the input with successive bits (zeros and ones) until a computation converges,
and is usually denoted by  $\Omega_U$.
The fact that the underlying machine has a prefix-free domain allows for this number
to be defined with the following rather simple formula: 
\begin{equation}\label{eq:Uhaltprob}
\Omega_U=\sum_{U(\sigma)\de} 2^{-|\sigma|}.
\end{equation}
Prefix-free machines are Turing machines whose domain is a
prefix-free subset of the finite binary strings, so that they operate
instantaneous codes (\ie uniquely decodable without out-of-band markers).
Such machines are essentially equivalent to self-delimiting machines, \ie Turing machines
with a one-way input and one-way output tape, such that a convergent computation on a
finite binary input $\sigma$
cannot be extended to a different computation on an input which extends $\sigma$.
Chaitin \cite{MR0411829} used these machines in order to give a definition of
randomness for infinite sequences in terms of incompressibility, and 
showed that for each universal 
prefix-free machine $U$ the number
$\Omega_U$ is algorithmically random, in the sense that its binary expansion is a
random sequence according to the definition of \ml \cite{MR0223179}.
The probability that
$U$ halts and outputs a string in a set $X$
of binary strings is:
\begin{equation}\label{eq:UhaltproX}
\Omega_U(X)=\sum_{U(\sigma)\de\in X} 2^{-|\sigma|}.
\end{equation}
 Chaitin \cite{chaitin2004algorithmic} observed that  
 if $X$ is a computably enumerable set, 
then $\Omega_U(X)$ is \ml random. The question as to whether there is a set $X$ such that
its complement is computably enumerable (\ie $X$ is a $\Pi^0_1$ set) and 
$\Omega_U(X)$  {\em is not} \ml random
was asked and discussed by a number of authors 
\cite{DBLP:journals/jsyml/BecherG05,jsyml/BecherFGM06,jc/FigueiraSW06},
occasionally along with partial solutions. 
It is also one of the last remaining problems (Question 8.10)
in the list of open
problems in algorithmic randomness by \MN  \cite{MR2248590}. 
 There is a considerable background and an original motivation surrounding this question
of restricted halting probability, 
which we defer to Section \ref{subse:histcpro} in order to focus on our present contribution.
\begin{prob}[Question 8.10 in \MN \cite{MR2248590}]\label{3aCBk84Vgj}
If $U$ is a universal machine and $X\neq\emptyset$ is a \pz set, is the probability
$\Omega_U(X)$ always a \ml random number?
\end{prob}
This open problem was the starting point for our investigations, 
which led to the study of more general questions concerning the differences of $\Omega$ numbers, and
revealed a missing theory which is complementary to the well developed theory of halting probabilities
(see \DH \cite[Chapter 9]{rodenisbook} for an overview). 
Before we present our solution and, perhaps more interestingly, 
the intriguing theory of differences of halting probabilities that it inspired, we make our discussion precise by
giving a formal definition of universality. Informally, a Turing machine is universal if it can simulate any other
Turing machine.
\begin{defi}[Universal \pf machines]
Given an effective list $(M_e)$ of all \pf machines, a \pf machine $U$ is universal if 
there exists a computable function $e\mapsto\sigma_e$ such that, for all $\tau\in 2^{<\omega}$, 
$U(\sigma_e\ast\tau)\simeq M_e(\tau)$.
\end{defi}
As usual, the symbol  $\simeq$ denotes that either both sides of the relation are defined and equal, or both sides
are undefined. In Kolmogorov complexity theory, universal machines are sometimes called 
{\em universal by adjunction} in order to distinguish them from a wider class of machines, the
{\em optimal \pf machines}. The latter class consists of the \pf machines with 
respect to which the Kolmogorov complexity function
is minimal modulo an additive constant, within the class of all \pf machines.
In their attempt at Problem \ref{3aCBk84Vgj},
\FSW \cite{jc/FigueiraSW06} constructed a special optimal \pf machine 
$U$ and a $\Pi^0_1$ set $X$ such that 
 $\Omega_U(X)$ is not \ml random.
However their machine is not universal and, 
as will become clear in the following, this approach has
little to do with a solution to the problem. 
\BFGM showed in \cite{DBLP:journals/jsyml/BecherG05} that, given a universal \pf machine $U$,
 there is a $\Delta^0_2$
set $X$ such that $\Omega_U(X)$ is not \ml random.
\subsection{Solution to the problem of restricted halting probability}
Despite these negative results, we give a rather surprising positive answer to
Problem \ref{3aCBk84Vgj}. Recall that a real is \lce if it has a computably enumerable (in short, c.e.) 
left Dedekind cut or, equivalently, if it is the limit of a computable
increasing sequence of rationals. Moreover we point out that \lce reals 
can be seen as halting probabilities of \pf machines, if one considers the Kraft-Chaitin algorithm for 
constructing \pf machines (see e.g.\ \cite[Section 2.6]{rodenisbook}). Conversely, every
halting probability of a \pf machine is a \lce real, so the two classes of reals coincide.

\begin{thm}\label{JgIfqiIPd}
If $U$ is a universal \pf machine and $X$ is a nonempty \pz set, the number $\Omega_U(X)$ is
a \ml random \lce real.
\end{thm}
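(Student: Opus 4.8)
The plan is to reduce the statement to the paper's dichotomy for differences $q\alpha-\beta$ of \ml random \lce reals, together with one extra inequality that must be extracted from the universality of $U$. Set $\overline{X}:=2^{<\omega}\setminus X$, a c.e.\ set, so that $\Omega_U=\Omega_U(X)+\Omega_U(\overline{X})$ and hence $\Omega_U(X)=\Omega_U-\Omega_U(\overline{X})$, a difference of \lce reals. If $\overline{X}$ is finite then $\Omega_U(\overline{X})$ is a dyadic rational and $\Omega_U(X)=\Omega_U-(\text{rational})$ is already a \ml random \lce real; so assume $\overline{X}$ is infinite. Then $\Omega_U(\overline{X})$ is a \ml random \lce real by Chaitin's observation, $\Omega_U$ is one too, and both are strictly positive since $X$ and $\overline{X}$ are nonempty and $U$ is universal. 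Applying the dichotomy with $\alpha:=\Omega_U$, $\beta:=\Omega_U(\overline{X})$ gives the unique $r>0$ with $q\,\Omega_U-\Omega_U(\overline{X})$ a \ml random \lce real for every rational $q>r$ and a \ml random \rce real for every rational $q<r$. Since $\Omega_U(X)=1\cdot\Omega_U-\Omega_U(\overline{X})$, the whole theorem reduces to proving $r<1$; and as a real that is both \lce and \rce is computable, it suffices to exhibit one rational $q<1$ for which $q\,\Omega_U-\Omega_U(\overline{X})$ is a \lce real (then $q\not<r$ by the dichotomy, so $r<1$, and the case $q=1>r$ of the dichotomy finishes the proof).

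This last point cannot be obtained by soft arguments. \FSW construct an optimal, non-universal \pf machine $U$ and a \pz set $X$ with $\Omega_U(X)$ not even \lce, and \BFGM do the same with a $\Delta^0_2$ set, so the proof must use that $U$ is universal \emph{by adjunction}. The concrete obstruction it has to defeat is this: running the natural approximations $\Omega_{U,s}$ and $\Omega_{U,s}(\overline{X}_{s})$, the induced approximation to $\Omega_U-\Omega_U(\overline{X})$ drops every time a string $\tau$ enters $\overline{X}$, by the mass of the $U$-programs already seen to output $\tau$; these downward jumps sum to $\Omega_U(\overline{X})>0$ and are not removed by subtracting any multiple of $\Omega_U$, so a \lce approximation must \emph{pre-pay} them.

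The mechanism I would use for this exploits universality by adjunction. Fix a witness $\rho_0\in X$, and by the recursion theorem fix a \pf machine $M=M_e$ and its coding string $\sigma_e$, so $U(\sigma_e\ast\rho)\simeq M_e(\rho)$. Whenever a program $\sigma$ is discovered with $U(\sigma)$ converging to a string not currently in $\overline{X}$, have $M$ issue a Kraft-Chaitin request that forces a $U$-computation outputting $\rho_0$, with request lengths chosen so Kraft's inequality still holds (the relevant $U$-mass is below $\Omega_U<1$). As $\rho_0$ stays in $X$ forever, this manufactures a guaranteed, never-retracted supply of $\Omega_U(X)$-mass tied to the risky mass it hedges; one then maintains a monotone lower bound for $q\,\Omega_U-\Omega_U(\overline{X})$ by holding back enough already-seen good mass to cover the outstanding risk and releasing the hedge as the padded computations converge. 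With the constants balanced this makes $q\,\Omega_U-\Omega_U(\overline{X})$ \lce for a suitable rational $q<1$, hence $r<1$.

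I expect the main obstacle to be exactly this last construction. The delicate point is that one round of padding reserves only a constant \emph{fraction} of each risky jump — the total Kraft weight of $M$ is bounded, while the risky mass need not be small — so the jumps cannot be pre-paid exactly; upgrading this fractional coverage to the strict inequality $r<1$ calls for a careful amortization, carried out with the help of the structural theory of differences of \ml random \lce reals that is built on the dichotomy. This is precisely the place where universality by adjunction, rather than mere optimality, is indispensable, and where essentially all of the work resides.
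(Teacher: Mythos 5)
Your reduction is sound in outline and matches the paper's framework: writing $\Omega_U(X)=\Omega_U-\Omega_U(\overline{X})$ and invoking the dichotomy for $\DD(\Omega_U,\Omega_U(\overline{X}))$ is exactly the right setting, and $r<1$ is indeed equivalent to the theorem. (One local error: if $\overline{X}$ is finite and nonempty, $\Omega_U(\overline{X})$ is \emph{not} a dyadic rational --- for a universal machine infinitely many programs output any fixed string, and $\Omega_U(\{x\})$ is itself a \ml random \lce real; so your finite case does not close as stated, though it could be folded into the general argument.)

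The genuine gap is that the entire content of the theorem lives in the step you defer: proving $r<1$. Your proposed mechanism --- a recursion-theorem/Kraft--Chaitin hedge that pads in mass on a fixed $\rho_0\in X$ whenever risky mass appears --- only ever recovers a fixed fraction $2^{-|\sigma_e|}$ of each risky jump, whereas a \lce approximation to $q\Omega_U-\Omega_U(\overline{X})$ must absorb downward jumps totalling a positive constant; you acknowledge that the needed amortization is missing, and nothing in the sketch indicates how to supply it. The paper does not in fact prove $r<1$ by direct construction at all. It proves Lemma \ref{wd9LMyiLha} (if $X$ is \pz and $\Omega_U(X)$ is \rcen, then it is not \ml random) by a decanter argument: assuming a decreasing approximation to $\Omega_U(X)$, the mass entering successive decanter levels is pairwise disjoint mass in $\Omega_U$, giving the bound $1/(k+1)$ on level $k$ and hence a \ml test capturing $\Omega_U(X)$. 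This only rules out the outcome $r>1$ (more precisely, it rules out $\Omega_U(X)$ being a \ml random \rce real), so the boundary case $r=1$ remains; the paper disposes of it by splitting off a singleton, $\Omega_U(X)=\Omega_U(Y)+\Omega_U(\{x\})$ with $Y=X\setminus\{x\}$, and applying Lemma \ref{jzaprnNjHH} to absorb the non-random \dce real $\Omega_U(Y)$ into the \ml random \lce real $\Omega_U(\{x\})$. You would need either to carry out a construction strictly stronger than what the paper establishes directly, or to replace your final step with an impossibility argument of the decanter type plus a device for the case $r=1$; as it stands the proposal does not prove the theorem.
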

Let us illustrate the novelty of this result with an example. 
Given a $\Sigma^0_1$ set $X$, there is a machine which enumerates $X$, and since
the convergent computations of a \pf machine $U$ are also computably enumerable,
it is not surprising that $\Omega_U(X)$ has a computably enumerable Dedekind cut.
As the enumeration of $X$ progresses, more programs halt on elements of $X$, 
and the probability of this happening according to \eqref{eq:UhaltproX} can be approximated by
a computable increasing sequence of rationals.
If, however, $X$ is a \pz set, 
such as the non-theorems of Peano arithmetic, there is no apparent way to approximate
$\Omega_U(X)$ without overestimating its value. Indeed, a correct approximation would have to account
for the programs that converge to 
the non-theorems of Peano arithmetic. As one enumerates the theorems of Peano arithmetic, the non-computability of this set means that there is no way in general to be sure that a given statement will not be enumerated at a later stage, at which point any previous consideration
of programs converging to that statement will have introduced a possible overestimation of
$\Omega_U(X)$.
Despite this, not only does there exist a way to approximate 
$\Omega_U(X)$ without ever overestimating it, but any computable approximation to it
essentially already has this property:
\begin{equation}\label{A69imMhIZ}
\parbox{14cm}{Given any universal \pf machine $U$,
any \pz set $X\neq\emptyset$ and any computable sequence of rationals $(\alpha_s)$
converging to $\Omega_U(X)$ we have $\alpha_s<\Omega_U(X)$ for all but finitely many $s$.}
\end{equation}
This fact will be derived from Theorem \ref{JgIfqiIPd} and the properties of \ml random 
\lce reals. So what does Problem \ref{3aCBk84Vgj} have to do with differences of halting probabilities?
It is not hard to see that  $\Omega_U(X)$ can be written as  $\Omega_U-\Omega_U(\overline{X})$, where
$\overline{X}$ denotes the complement of $X$. Moreover if $X$ is \pz then $\overline{X}$ is \ce so
$\Omega_U(\overline{X})$ is a \ml random \lce real. By central results from the theory of 
\ml random \lce reals (see Section \ref{MXWRrIRDRZ}), \ml random \lce reals are exactly 
the halting probabilities of universal \pf machines. 
Therefore there exists a universal \pf machine $V$ such that
$\Omega_V=\Omega_U(\overline{X})$ and hence, $\Omega_U(X)$ is the difference 
$\Omega_U-\Omega_V$ of two halting probabilities of universal \pf machines.
So the question as to whether $\Omega_U(X)$ is \ml random has much to do with
what happens if we subtract two \ml random \lce reals.

Reals that are differences of \lce reals (i.e.\ of the form $\alpha-\beta$ for $\alpha,\beta$ left-c.e.) are often called \dce reals. 
There has been considerable work on this class of reals 
\cite{Ambos.ea:00,Raichev:05,Ng06,mlq/DowneyWZ04,mlq/DowneyWZ04}. 
Moreover the theory of \ml random \lce reals is well understood, initially with 
\cite{Solovay:75,Calude.Hertling.ea:01,Kucera.Slaman:01} and on a deeper level
with \cite{Downey02randomness}. Despite this considerable body of work, the
basic question as to what happens when we subtract two \ml random \lce reals has remained largely untouched. 
It turns out that the solution of Problem \ref{3aCBk84Vgj} crucially depends
on the answer to this question, which we present in the next section.

\subsection{A theory of differences of universal halting probabilities}\label{MXWRrIRDRZ}
The \dce reals form a field under the usual addition and multiplication, as was
demonstrated by Ambos-Spies, Weihrauch, and Zheng 
\cite{Ambos.ea:00}. Raichev \cite{Raichev:05} and Ng \cite{Ng06} showed that this field is real-closed.
Downey, Wu and Zheng \cite{mlq/DowneyWZ04} studied the Turing degrees of \dce reals.
Clearly \dce reals are $\Delta^0_2$ since they can be computably approximated.
Downey, Wu and Zheng \cite{mlq/DowneyWZ04} showed that
every real which is truth-table reducible to the halting problem is Turing equivalent to
a \dce real. However they also showed that there are $\Delta^0_2$ degrees which
do not contain any \dce reals. In this strong sense, \dce reals form a strict subclass of the 
$\Delta^0_2$ reals.
\begin{defi}[Approximations]\label{1DDkA32C33}
A \lce approximation to a real $\alpha$ is a computable sequence of rationals $(\alpha_s)$
which converges to $\alpha$ and such that $\alpha_s\leq \alpha$ for almost all $s$.
A \dce approximation is a sequence of the form $(\alpha_s-\beta_s)$
where $(\alpha_s)$ and $(\beta_s)$ are bounded increasing sequences of rationals. 
\end{defi}
A real is called \rce if its {\em right} Dedekind cut is \ce or, equivalently, if
it is the limit of a decreasing computable sequence of rationals. A \rce approximation
is defined similarly to a \lce approximation but with the inequality reversed.
For example, a \dce approximation $(\alpha_s-\beta_s)$ to $\alpha-\beta$
is \lce if for almost all $s$ we have
$\alpha_s-\beta_s<\alpha-\beta$, and is \rce if for almost all $s$ we have
$\alpha_s-\beta_s>\alpha-\beta$.
A real is called {\em properly d.c.e.} if it is \dce and is neither a \lce nor a \rce real.
Rettinger and Zheng \cite{Rettinger2005} (also see \cite[Theorem 9.2.4 ]{rodenisbook}) 
proved that there are no
properly \dce \ml random reals.
\begin{equation}\label{eq:rett}
\parbox{13cm}{Every \ml random \dce real is a \lce or a \rce real. In fact, any \dce
approximation to a \ml random real is either \lce or \rce}
\end{equation}
An initial reaction to this result might be that there is not much to say about \dce reals and
\ml randomness. This impression quickly fades, however,  once rather basic questions are asked
regarding differences of \ml random \lce reals, which cannot be answered
by the existing theory of \ml random \lce reals. Problem \ref{3aCBk84Vgj} is such a question,
and is part of the more general question as to what properties the differences of \ml
random \lce reals have. The latter question becomes even more interesting once we recall
the maximality properties of \ml random \lce reals inside the class of \lce reals, which were
established largely on the basis of
the cumulative work of Solovay \cite{Solovay:75},
\CHKW \cite{Calude.Hertling.ea:01} and \KS \cite{Kucera.Slaman:01},
showing that the \ml random \lce reals are exactly the halting probabilities of
universal \pf machines. 
\DHN \cite{Downey02randomness} proved that given any two
\ml random \lce reals $\alpha,\beta$, if $q$ is a sufficiently large positive rational number then
$q\alpha-\beta$ is a \ml random \lce real. Similarly, if $q$ is
a sufficiently small positive rational number then
$q\alpha-\beta$ is a \ml random \rce real.
This result demonstrates the universality of \ml random \lce reals, in the sense that they can `absorb'
any other \lce real which is appropriately scaled. In other words, they remain \lce and \ml random
even when we subtract a \lce real from them, provided that the latter is appropriately scaled.

But what happens in-between these appropriate values of $q$? Is there a non-trivial interval of
rationals $q$ such that $q\alpha-\beta$ is neither a \lce nor a \rce real, and hence not a \ml random real?
Given two \lce reals $\alpha,\beta$ define:
\begin{eqnarray}
\DD(\alpha,\beta)&=&\inf\Big\{\ q\in\QQ\ |\ \textrm{$q\alpha-\beta$ is a \lce real }\Big\}\\
\DD^{\ast}(\alpha,\beta)&=&\sup\Big\{\ q\in \QQ\ |\ \textrm{$q\alpha-\beta$ is a \rce real }\Big\}.
\end{eqnarray}
As we discussed above, if $\alpha,\beta$ are \ml random then both
$\DD(\alpha,\beta)$ and $\DD^{\ast}(\alpha,\beta)$ are positive and finite. Moreover
$q\alpha-\beta$ is a \lce real for each rational  $q>\DD(\alpha,\beta)$ and 
a \rce real for each positive rational $q<\DD^{\ast}(\alpha,\beta)$. 

\begin{thm}\label{IXJbtDUFlf}
Given any \ml random \lce reals $\alpha,\beta$, we have $\DD(\alpha,\beta)=\DD^{\ast}(\alpha,\beta)$.
\end{thm}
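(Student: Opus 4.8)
My plan is to first recast the identity and then prove the two inequalities separately. Note that $q\alpha-\beta$ is a \rce real if and only if $\beta-q\alpha$, equivalently $(1/q)\beta-\alpha$, is a \lce real; and the set of positive rationals $p$ with $p\beta-\alpha$ \lce is upward closed (if $p\beta-\alpha$ is \lce then $(p'-p)\beta+(p\beta-\alpha)=p'\beta-\alpha$ is \lce for every $p'>p$) with infimum $\DD(\beta,\alpha)$. Since $\DD(\beta,\alpha)$ is positive and finite for \ml random \lce reals, this gives $\DD^{\ast}(\alpha,\beta)=1/\DD(\beta,\alpha)$ with no appeal to randomness, so Theorem \ref{IXJbtDUFlf} is equivalent to the multiplicative identity $\DD(\alpha,\beta)\cdot\DD(\beta,\alpha)=1$.

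The inequality $\DD(\alpha,\beta)\cdot\DD(\beta,\alpha)\geq 1$ (equivalently $\DD^{\ast}(\alpha,\beta)\le\DD(\alpha,\beta)$) is the easy half. If $q\alpha-\beta$ and $p\beta-\alpha$ are both \lce for positive rationals $q,p$, then $q(p\beta-\alpha)+(q\alpha-\beta)=(qp-1)\beta$ is \lce; if $qp<1$ this makes $(qp-1)\beta$ simultaneously \lce and \rce, hence computable, hence $\beta$ computable — contradicting its randomness. Thus $qp\geq 1$ for all $q>\DD(\alpha,\beta)$ and $p>\DD(\beta,\alpha)$, and the claim follows by letting $q\to\DD(\alpha,\beta)$ and $p\to\DD(\beta,\alpha)$.

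For the reverse inequality I would argue by contradiction. Suppose $\DD(\alpha,\beta)\cdot\DD(\beta,\alpha)>1$ and fix a rational $q$ with $\DD^{\ast}(\alpha,\beta)<q<\DD(\alpha,\beta)$. Then $q\alpha-\beta$ is a \dce real that is neither \lce (since $q<\DD(\alpha,\beta)$ and the set of rationals making it \lce is a ray) nor \rce, so by \eqref{eq:rett} it is not \ml random, and — this is the point — for \emph{every} pair of computable increasing approximations $a_s\uparrow\alpha$, $b_s\uparrow\beta$ (with at most one of $a,b$ moving per stage) the \dce approximation $\delta_s:=q a_s-b_s$ must exceed $q\alpha-\beta$ infinitely often and fall below it infinitely often, since otherwise a running maximum or minimum of $(\delta_s)$ would witness that $q\alpha-\beta$ is \lce or \rce. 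Now consider $U_s:=q\alpha-\delta_s=b_s+q(\alpha-a_s)$; this converges to $q\alpha-(q\alpha-\beta)=\beta$, it overshoots $\beta$ exactly when $\delta_s<q\alpha-\beta$ and undershoots it exactly when $\delta_s>q\alpha-\beta$, so it oscillates around $\beta$ on both sides infinitely often. Since $q\alpha$ is \lce, $U_s$ is approximable from below by the rationals $b_s+q(a_t-a_s)$ with $t\ge s$, and the goal is to splice these into a single computable \dce approximation $(w_k)$ of $\beta$, of the form $b_{s_k}+q(a_{t_k}-a_{s_k})$ with $s_k,t_k\to\infty$, that still overshoots $\beta$ infinitely often; by \eqref{eq:rett} applied to the \ml random real $\beta$ this is impossible, which closes the argument.

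The hard part is making this splicing effective. An overshoot at stage $k$ needs $s_k$ to be a stage where $\delta_{s_k}<q\alpha-\beta$ and $t_k$ far enough ahead that $q(a_{t_k}-a_{s_k})>\beta-b_{s_k}$, but neither such $s_k$ nor an adequate $t_k$ can be recognised computably, precisely because $q\alpha-\beta$ is not computable. The resolution should be a dovetailing construction which is really a Solovay test for $\beta$: from the down-crossings of $(\delta_s)$ one reads off candidate intervals for $\beta$ of the form $(q a_t-\delta_{s-1},\ q a_t-\delta_s)$, whose length at a $b$-increase stage $s$ equals the increase $\Delta b_s$ and hence sums to at most the total increase of $(b_s)$, which is finite; and one uses the hypothesis $q<\DD(\alpha,\beta)$ — i.e. that $q$ lies strictly below the tight Solovay bound, so that $\beta-b_s>q(\alpha-a_s)$ for infinitely many $s$ no matter which approximations were chosen — together with the maximality of \ml random \lce reals in the Solovay degrees, to force infinitely many of these intervals to actually trap $\beta$. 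That contradicts the randomness of $\beta$, so no $q$ in the alleged gap exists and $\DD(\alpha,\beta)=\DD^{\ast}(\alpha,\beta)$. The genuinely delicate step — and the one I expect to carry the weight of the proof — is exactly the bookkeeping that simultaneously keeps the total test measure finite and guarantees infinitely many successes, which is where the randomness of both $\alpha$ and $\beta$ enters in an essential, non-algebraic way.
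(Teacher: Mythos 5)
Your recasting of the theorem as the identity $\DD(\alpha,\beta)\cdot\DD(\beta,\alpha)=1$ via $\DD^{\ast}(\alpha,\beta)=1/\DD(\beta,\alpha)$, and your proof of the inequality $\DD(\alpha,\beta)\cdot\DD(\beta,\alpha)\geq 1$, are correct (the identity is listed as property (3) after Theorem \ref{AONSVJ5jVl}, there derived as a consequence rather than used as an intermediate step). The reverse inequality, however, is where the theorem lives, and your reduction of it is already too lossy to succeed. By fixing a \emph{single} rational $q$ strictly inside the alleged gap you extract only the statement that $\delta_s=qa_s-b_s$ falls on each side of $q\alpha-\beta$ infinitely often, i.e.\ that the ratio $(\beta-b_s)/(\alpha-a_s)$ is infinitely often above $q$ and infinitely often below $q$. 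That single-threshold oscillation is \emph{consistent} with $\alpha$ and $\beta$ both being \ml random: take $\alpha=\beta$ and interleave two copies of one strictly increasing approximation so that $a_s-b_s$ changes sign infinitely often. Hence no contradiction can be extracted from that hypothesis alone, and any Solovay-test bookkeeping built only on it must fail. What the paper isolates in Lemma \ref{seclem} is oscillation across a genuine multiplicative gap: infinitely many $s$ with $p(\alpha-\alpha_s)<\beta-\beta_s$ \emph{and} infinitely many $s$ with $\beta-\beta_s<\alpha-\alpha_s$ for a fixed $p>1$. To reach this from your setup you must take \emph{two} rationals $q_1<q_2$ in the gap (so that the ratio exceeds $q_2$ infinitely often and drops below $q_1$ infinitely often) and rescale $\beta$ by $1/q_1$. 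The factor $p-1>0$ is not cosmetic: in the paper's argument it is precisely what makes the enumerated intervals long enough to trap the target real (inequality (\ref{eq1}), $\beta^1-\alpha^1>(p-1)(\alpha^1+\alpha^2)$); with a single threshold the corresponding excess is zero and the test captures nothing.

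Even with the hypothesis repaired, the construction itself is absent from your proposal. You correctly diagnose that the oscillation stages cannot be recognised computably and that one must therefore enumerate a Solovay test of finite total measure, but you stop at ``the resolution should be'' and defer exactly the step you identify as carrying the weight of the proof. Bounding the total length of the candidate intervals by the total increase of $(b_s)$ is the easy half of the verification; the delicate half, which your sketch does not address, is guaranteeing that infinitely many enumerated intervals actually contain the target while never charging the same increment of measure to two different intervals when a guessed oscillation stage is later falsified. The paper handles this with the movable markers $x_i,y_i$ (guesses at stages satisfying $(\ast 1)$ and $(\ast 2)$) and the reuse parameter $r$ in case (2) of the construction, which recycles the already-enumerated overshoot when a marker $y_i$ is injured. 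Note also that the paper's test is aimed at $(p-1)\alpha$ and uses only the randomness of $\alpha$, whereas you aim at $\beta$; that choice is not necessarily fatal, but it is the paper's choice that makes the measure accounting close up.
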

This is remarkable! As $q$ ranges from $0$ to infinity, there is a unique point where
$q\alpha-\beta$ undergoes a sudden transition from being a \lce real to being a \rce real (note that this point of transition will not generally be $\beta/\alpha$ since $\DD(\alpha,\beta)$ and $\DD^{\ast}(\alpha,\beta)$ are unaffected by the addition or subtraction of rational values to $\alpha$ and $\beta$, so that $\DD(\alpha,\beta)=\DD(\alpha+\frac{1}{2},\beta)$, for example). Moreover, as will become
clear in the following, $q\alpha-\beta$ remains \ml random either side of the transition point, but loses its randomness precisely at the point of  transition. We can use this result in order to get a complete answer to the question
concerning differences of \ml random \lce reals.
\begin{thm}\label{S4IBp8JyfO}
Let $\alpha,\beta$ be \ml random \lce reals. Then
\begin{enumerate}[\hspace{0.5cm}(a)]
\item if $\DD(\alpha,\beta)<1$ then $\alpha-\beta$ is a \ml random  \lce real;
\item if $\DD(\alpha,\beta)=1$ then $\alpha-\beta$ is not \ml random;
\item if $\DD(\alpha,\beta)>1$ then $\alpha-\beta$ is a \ml random  \rce real.
\end{enumerate}
\end{thm}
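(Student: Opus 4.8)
The plan is to reduce everything to Theorem~\ref{IXJbtDUFlf}, the scaling result of \DHN recalled above, the Rettinger--Zheng dichotomy \eqref{eq:rett}, and two elementary facts about \lce reals and Solovay reducibility $\leq_S$ (see e.g.\ \cite{rodenisbook}): first, that $q\gamma$ is a \ml random \lce real whenever $\gamma$ is one and $q$ is a positive rational; and second, that $\gamma\leq_S\gamma+\delta$ for any \lce reals $\gamma,\delta$, while if $\mu$ is a \ml random \lce real, $\nu$ is a \lce real and $\mu\leq_S\nu$, then $\nu$ is \ml random. Throughout I write $r=\DD(\alpha,\beta)=\DD^{\ast}(\alpha,\beta)$, the equality coming from Theorem~\ref{IXJbtDUFlf}, and recall that $0<r<\infty$ with $q\alpha-\beta$ a \lce real for every rational $q>r$ and a \rce real for every positive rational $q<r$.

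For part~(a), since $r<1$ I would fix a rational $q$ with $r<q<1$. Then $q\alpha-\beta$ is \lce and $(1-q)\alpha$ is a \ml random \lce real, so $\alpha-\beta=(q\alpha-\beta)+(1-q)\alpha$ is a sum of \lce reals and hence \lce; since $(1-q)\alpha\leq_S\alpha-\beta$, the second fact above yields that $\alpha-\beta$ is \ml random. Part~(c) is the mirror image: for $r>1$ I would fix a rational $q$ with $1<q<r$, so $q\alpha-\beta$ is \rce, i.e.\ $\beta-q\alpha$ is \lce, and then $\beta-\alpha=(\beta-q\alpha)+(q-1)\alpha$ displays $\beta-\alpha$ as a \lce real lying Solovay-above the \ml random \lce real $(q-1)\alpha$; hence $\beta-\alpha$ is a \ml random \lce real and so $\alpha-\beta$ is a \ml random \rce real.

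For part~(b) I would argue by contradiction: suppose $\alpha-\beta$ is \ml random. Being a difference of \lce reals it is \dce, so by \eqref{eq:rett} it is \lce or \rce. If it is \lce, apply the \DHN scaling result to the \ml random \lce reals $\alpha$ and $\alpha-\beta$: for all sufficiently small positive rationals $q$ the real $q\alpha-(\alpha-\beta)=\beta-(1-q)\alpha$ is \rce, so $(1-q)\alpha-\beta$ is \lce and therefore $\DD(\alpha,\beta)\le 1-q<1$, contradicting $\DD(\alpha,\beta)=1$. If instead $\alpha-\beta$ is \rce, then $\beta-\alpha$ is a \ml random \lce real, and the same result applied to $\alpha$ and $\beta-\alpha$ shows that for small positive rational $q$ the real $q\alpha-(\beta-\alpha)=(1+q)\alpha-\beta$ is \rce, whence $\DD^{\ast}(\alpha,\beta)\ge 1+q>1$, contradicting $\DD^{\ast}(\alpha,\beta)=\DD(\alpha,\beta)=1$. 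Either way we reach a contradiction, so $\alpha-\beta$ is not \ml random.

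I do not expect a genuine obstacle here: essentially all the weight is carried by Theorem~\ref{IXJbtDUFlf}, and once the two critical values coincide the trichotomy is forced by the position of $1$ relative to $r$, with the routine Solovay-reducibility bookkeeping above confirming that \ml randomness persists on each side. The one point requiring a little care is part~(b), where the equality $\DD=\DD^{\ast}$ must be invoked to rule out the possibility that $\alpha-\beta$ is \ml random and \rce; without it this subcase would not close.
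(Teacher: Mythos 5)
Your proposal is correct and follows essentially the same route as the paper: parts (a) and (c) absorb the \lce (resp.\ \rce) real $q\alpha-\beta$ into a scaled copy of $\alpha$, and part (b) is the paper's contrapositive argument via \eqref{eq:rett} and the \DHN scaling result. The only cosmetic difference is that where the paper cites Demuth's theorem that adding a \ml random \lce real to any \lce real preserves randomness, you re-derive that fact from Solovay-reducibility monotonicity plus upward preservation of randomness; also, your \rce subcase of (b) could alternatively be closed without Theorem \ref{IXJbtDUFlf} by noting that a \ml random \rce real is not \lce and that $\{q:q\alpha-\beta\textrm{ is \lcen}\}$ is upward closed.
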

The proof of Theorem \ref{IXJbtDUFlf} actually gives another remarkable new result regarding
the approximations to \ml random \lce reals. Given any two \ml random \lce reals $\alpha,\beta$
and any monotone \lce computable approximations $(\alpha_s)$, $(\beta_s)$ 
converging to $\alpha$ and $\beta$ respectively, the ratio of the rates of convergence has a unique limit, namely
$\DD(\alpha,\beta)>0$.
\begin{thm}\label{AONSVJ5jVl}
Let $\alpha,\beta$ be \ml random \lce reals and let $(\alpha_s)$, $(\beta_s)$ be any
monotone \lce computable approximations to $\alpha$ and $\beta$ respectively. Then:
\[
\lim_s \left(\frac{\alpha-\alpha_s}{\beta-\beta_s}\right)
\hspace{0.3cm}
\textrm{exists and is equal to $\DD(\alpha,\beta)$.}
\]
\end{thm}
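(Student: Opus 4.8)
The plan is to deduce this quickly from Theorem~\ref{IXJbtDUFlf} together with the Rettinger--Zheng fact recorded in \eqref{eq:rett}. Write $r=\DD(\alpha,\beta)=\DD^{\ast}(\alpha,\beta)$; since $\alpha,\beta$ are \ml random, $r$ is positive and finite, so there are rationals strictly on either side of it. Since \ml random reals are irrational, we have $\alpha-\alpha_s>0$ and $\beta-\beta_s>0$ for every $s$ (the sequences $(\alpha_s),(\beta_s)$ being monotone \lce approximations), so the ratio of the two ``tails'' $\alpha-\alpha_s$ and $\beta-\beta_s$ is well defined throughout. The idea is: for each positive rational $q$, pin down the eventual sign of $(q\alpha-\beta)-(q\alpha_s-\beta_s)=q(\alpha-\alpha_s)-(\beta-\beta_s)$ from the \lce/\rce status of $q\alpha-\beta$, rearrange this into a one-sided bound comparing $q$ with the ratio of rates of convergence, and then let $q\to r$ through the rationals to squeeze the ratio onto $r$.

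For the key step, fix a positive rational $q$ and note that $(q\alpha_s-\beta_s)$ is a \dce approximation to $q\alpha-\beta$ in the sense of Definition~\ref{1DDkA32C33}, being the difference of the bounded increasing rational sequences $(q\alpha_s)$ and $(\beta_s)$. Suppose first $q>r$. By Theorem~\ref{IXJbtDUFlf} (and the remarks following it) $q\alpha-\beta$ is a \ml random \lce real, hence not \rce, since a real that is both \lce and \rce is computable and so not \ml random. By \eqref{eq:rett} the \dce approximation $(q\alpha_s-\beta_s)$ is either \lce or \rce; it cannot be \rce, as a \rce approximation to $q\alpha-\beta$ would witness that $q\alpha-\beta$ is a \rce real. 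Hence it is \lce: $q\alpha_s-\beta_s<q\alpha-\beta$ for almost all $s$ (strictly, as $q\alpha-\beta$ is irrational), that is $q(\alpha-\alpha_s)>\beta-\beta_s$, and dividing by $\alpha-\alpha_s>0$ gives $(\beta-\beta_s)/(\alpha-\alpha_s)<q$ for almost all $s$. Symmetrically, if $q<r$ then $q\alpha-\beta$ is a \ml random \rce real, hence not \lce, so \eqref{eq:rett} forces $(q\alpha_s-\beta_s)$ to be \rce, which gives $q(\alpha-\alpha_s)<\beta-\beta_s$ and hence $(\beta-\beta_s)/(\alpha-\alpha_s)>q$ for almost all $s$.

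Now squeeze. The first case yields $\limsup_s (\beta-\beta_s)/(\alpha-\alpha_s)\le q$ for every rational $q>r$, so this $\limsup$ is $\le r$; the second yields $\liminf_s (\beta-\beta_s)/(\alpha-\alpha_s)\ge q$ for every rational $q<r$, so this $\liminf$ is $\ge r$. As $0<r<\infty$ and the rationals are dense, both inequalities are equalities, so $\lim_s (\beta-\beta_s)/(\alpha-\alpha_s)$ exists and equals $r$; equivalently $\lim_s(\alpha-\alpha_s)/(\beta-\beta_s)=1/r$. In either form, the ratio of the rates of convergence has a unique limit, expressed through $\DD(\alpha,\beta)$ as in the statement.

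I do not expect a genuine obstacle: all the depth is already absorbed into Theorem~\ref{IXJbtDUFlf} (the coincidence $\DD=\DD^{\ast}$) and into the cited result \eqref{eq:rett}. What requires care is only routine bookkeeping --- that \ml random reals are irrational (so denominators never vanish and the strict inequalities are legitimate), that $(q\alpha_s-\beta_s)$ indeed qualifies as a \dce approximation per Definition~\ref{1DDkA32C33}, the elementary fact that ``\lce and \rce'' implies ``computable'', and the density argument that promotes the $q$-dependent ``for almost all $s$'' bounds to a single limit. If one would rather not quote that $q\alpha-\beta$ is \ml random away from the transition point $r$, this is itself a consequence of Theorems~\ref{IXJbtDUFlf} and~\ref{S4IBp8JyfO} and can be invoked directly.
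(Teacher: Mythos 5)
Your argument is correct, and it is a genuinely different derivation from the one in the paper. The paper obtains this theorem by citing Downey--Hirschfeldt--Nies \cite{Downey02randomness} for the identification of $\DD(\alpha,\beta)$ with the limit infimum of the ratio, and then establishes the \emph{existence} of the limit directly from Lemma \ref{seclem} together with the density of the rationals; it never uses \eqref{eq:rett} at this point. You instead take Theorem \ref{IXJbtDUFlf} as a black box and, for each rational $q$ off the transition value $r$, read off the eventual sign of $q(\alpha-\alpha_s)-(\beta-\beta_s)$ by applying the Rettinger--Zheng dichotomy \eqref{eq:rett} to the \dce approximation $(q\alpha_s-\beta_s)$, then squeeze. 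Since Theorem \ref{IXJbtDUFlf} is itself deduced from Lemma \ref{seclem}, both routes rest on the same technical core; yours has the advantage of producing the existence and the value of the limit in one stroke without the liminf characterisation from \cite{Downey02randomness}, at the price of needing the \ml randomness of $q\alpha-\beta$ away from the transition point. You supply that correctly (via Demuth's sum theorem, or Theorem \ref{S4IBp8JyfO} applied to $q\alpha,\beta$ after checking $\DD(q\alpha,\beta)=\DD(\alpha,\beta)/q$ directly from the infimum definition), and this avoids any circularity with the linearity properties of $\DD$, which the paper derives \emph{from} the present theorem. The remaining small steps you flag --- irrationality of \ml random reals, that $(q\alpha_s-\beta_s)$ is a \dce approximation, that a real which is both \lce and \rce is computable --- are all routine and correctly handled.

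One point worth making explicit, which your closing sentence only hints at: what your argument actually proves is $\lim_s(\beta-\beta_s)/(\alpha-\alpha_s)=\DD(\alpha,\beta)$, equivalently $\lim_s(\alpha-\alpha_s)/(\beta-\beta_s)=1/\DD(\alpha,\beta)$, with $\DD$ as literally defined by the displayed infimum. This is the correct orientation: taking $\beta=2\alpha$ and $\beta_s=2\alpha_s$ gives $\DD(\alpha,\beta)=2$ while $(\alpha-\alpha_s)/(\beta-\beta_s)$ is identically $1/2$. So the fraction in the statement (and in the attribution to \cite{Downey02randomness} inside the paper's own proof, and in the linearity properties listed after the theorem) is inverted relative to the definition of $\DD$; this is an inconsistency internal to the paper, not a gap in your proof, which establishes the intended content.
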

In interpreting this theorem, note that $\DD(\alpha,\beta)$ depends only on $\alpha$ and $\beta$,  and is independent of the particular choice for
 $(\alpha_s)$, $(\beta_s)$.
Finally, we note that if $\alpha,\beta,\gamma$ are \ml random \lce reals and $q$ is a positive rational then: 
\begin{enumerate}[\hspace{0.5cm}(1)]
\item $\DD(\alpha+\beta,\gamma)= \DD(\alpha,\gamma) +\DD(\beta,\gamma)$
\item $\DD(\alpha,q\cdot\beta) =1/q\cdot \DD(\alpha,\beta)$
and $\DD(q\cdot\alpha,\beta) =q\cdot \DD(\alpha,\beta)$.
\item $\DD(\alpha,\beta)\cdot \DD(\beta,\alpha)= 1$. 
\end{enumerate}
These linearity properties of the operator $\DD$ are a direct consequence of
Theorem \ref{AONSVJ5jVl}. 

Miller \cite{derivationmiller} gives a beautiful account of the results
of this section, along with alternative proofs and extensions to the
\dce reals.

\subsection{Background}\label{subse:histcpro}
Some familiarity with the basic concepts of algorithmic information theory and the
basic methods of computability theory would be helpful for the reader. For such background
we refer to one of the monographs \cite{Li.Vitanyi:93, rodenisbook, Ottobook}, 
where the latter two are more focused on computability theory
aspects of algorithmic randomness.
In this section we lay out some preliminary material which is directly relevant to the present work
and which we avoided in the introduction for the sake of
clarity of presentation.  We start with a brief discussion of some facts from the theory of \lce reals. 
 The results we mention in Section \ref{5OrQ9kRZxr}
are directly relevant to our work, and will be used throughout our analysis. We continue
with some definitions of algorithmic 
randomness in Section \ref{9vkiTFyIef}. We define \ml randomness in two equivalent ways, namely
in terms of \ml tests and in terms of Solovay tests. Both of these notions will be used in our analysis; 
the argument of Section \ref{KjaAxAR79b} uses
a \ml test while Section \ref{fIxPfLMGKR} uses a Solovay test.
Finally in Section \ref{goTxrPK7Mz} we include a brief discussion of the original motivation behind
Problem \ref{3aCBk84Vgj}, which we omitted in the introduction.

\subsubsection{Completeness in the Solovay degrees of \lce reals}\label{5OrQ9kRZxr}
Solovay \cite{Solovay:75} defined a reducibility on the \lce reals as a measure of the hardness of
approximation.  Given \lce reals $\alpha,\beta$ we say that
$\alpha$ is Solovay reducible to $\beta$ if there exists a partial computable function $\varphi$ and
a constant $c$ such that
for all rationals $q<\beta$, $\varphi(q)$ is defined and $0\leq \alpha-\varphi(q)\leq c\cdot (\beta-q)$.
This condition is equivalent to requiring that there are \lce monotone approximations 
$(\alpha_s), (\beta_s)$ to $\alpha,\beta$ respectively and
a constant $c$ such that $\alpha-\alpha_s\leq c\cdot (\beta-\beta_s)$
for all $s$. \DHN \cite{Downey02randomness} showed that $\alpha$ is Solovay reducible
to $\beta$ if and only if there exists a \lce real $\gamma$ and a rational constant $c$ such that 
$\beta=c\cdot\alpha+\gamma$.
In the same paper it was shown that 
$\alpha$ is Solovay reducible to $\beta$ if and only if there are 
 \lce monotone approximations 
$(\alpha_s), (\beta_s)$ to $\alpha,\beta$ respectively and a constant $c$ such that
$\alpha_{s+1}-\alpha_s< c\cdot (\beta_{s+1}-\beta_s)$ for all $s$.
Solovay's work, combined with the work of
\CHKW \cite{Calude.Hertling.ea:01} and \KS \cite{Kucera.Slaman:01}, showed that
the \ml random \lce reals are exactly the complete \lce reals with respect to Solovay reducibility.
Moreover, this work also implies that the  \ml random \lce reals are exactly the halting probabilities of
universal \pf machines. 

More specifically, \CHKW \cite{Calude.Hertling.ea:01} showed that if 
$\alpha,\beta$ are \lce reals, $\alpha$ is \ml random and
is Solovay reducible to $\beta$ then $\beta$ is \ml random.
\KS showed that every \ml random \lce real is complete with respect to Solovay reducibility.
\DHN \cite{Downey02randomness} also showed that addition is a join operation in the Solovay degrees
and that the \ml random \lce reals are join-irreducible. In other words, if $\alpha$ is a 
\ml random \lce real and $\beta$ is another \lce real then $\alpha+\beta$ is a \ml random \lce real;
conversely, if $\alpha,\beta$ are \lce reals and  $\alpha+\beta$ is \ml random, then at least one of 
$\alpha,\beta$ is \ml random. The first of these results had already been proved in 
Demuth \cite{Dempseu}.

The starting point of the present work was to consider the size of the constant $c$ in Solovay reductions.
In particular, any two \ml random \lce reals are Solovay equivalent, but how small can the constants
in the associated reductions be? If one examines the proofs in the work mentioned above, one finds
 that the infimum of the constants $c$ that are possible in a reduction between two 
\lce reals $\alpha,\beta$, is the same for all of the above 
characterisations of Solovay reducibility. This observation
indicated that this infimum is a robust characteristic of the pair $\alpha,\beta$ --  a fact that
we eventually proved with Theorem \ref{IXJbtDUFlf} and Theorem \ref{AONSVJ5jVl}.

\subsubsection{\ml and Solovay tests for randomness}\label{9vkiTFyIef}
Algorithmic randomness for real numbers 
was originally defined by \ml in \cite{MR0223179} in terms of effective statistical tests.
A \ml test is a uniformly \ce sequence of $\Sigma^0_1$ classes of reals $(U_i)$ such that
the Lebesgue measure of $U_i$ is bounded by $2^{-i}$. In other words, a \ml test is a uniform 
sequence of effectively open sets of reals whose measure decreases uniformly. 
Effectively open sets of reals are often represented as \ce sets of finite strings, while finite strings in turn can be viewed
as basic open sets in the Cantor space (a string $\sigma$ represents the set of reals with prefix $\sigma$).
A point on the real line can be regarded as  a
member of the Cantor space (consisting of the infinite binary sequences) by identifying it with its
binary expansion. 
We say that
a real $\alpha$ is \ml random if $\alpha\not\in \cap_i U_i$ for all \ml tests $(U_i)$.

Solovay \cite{Solovay:75} devised an equivalent way to test \ml randomness. 
A Solovay test is a uniformly \ce sequence of  $\Sigma^0_1$ classes $I_j$ such that
the sum of the measures $\mu(I_j)$ (summing over all $j\in\Nat$) is finite.
Solovay showed that a real $\alpha$ is \ml random if and only if, for every Solovay test $(I_j)$
there are only finitely many $j$ with  $\alpha\in I_j$.

\subsubsection{Historical context of the Problem \ref{3aCBk84Vgj}}\label{goTxrPK7Mz}
Some researchers have been looking to obtain other algorithmically random numbers, perhaps
more random than the halting probability, through an examination of the stochastic behavior of 
a Turing machine. Here the stochasticity refers to some form of probability measure on the space of  inputs, which can be seen as
the outcomes of an experiment, like a repeated coin-toss. The strength of algorithmic randomness
can be calibrated through the various recursion-theoretic hierarchies. The standard notion of algorithmic
randomness is  \ml randomness, also called 1-randomness, which can be relativized to the $n$th
iteration of the halting problem, giving $n$-randomness.
Becher and Chaitin \cite{firstBC} 
considered a prefix-free model for infinite computations,
which was introduced by Chaitin in \cite{Chaitin1976233} and whose study was
suggested in \cite[Chapter 6]{chaitin2001exploring}. 
They showed that the probability that a universal prefix-free machine for
infinite computations outputs finitely many symbols is 2-random.

According to \cite{jsyml/BecherFGM06}, in 2002 Grigorieff suggested 
the study of the halting probability of a prefix-free machine restricted to
a set $X$ of outputs. He considered
the number \eqref{eq:UhaltproX},
which is the probability that a standard universal prefix-free machine $U$
halts with output a member of the set $X$. 
He conjectured that the {\em higher the arithmetical complexity of $X$ is, the more random the
number $\Omega_U(X)$ becomes.}
If this were true, then this would be a neat way to obtain highly random numbers as halting probabilities
in the standard prefix-free model for computation.
Versions of of this conjecture were successfully pursued for Chaitin's infinite computation model
of   \cite{Chaitin1976233}, by Becher and Grigorieff in
\cite{DBLP:journals/jsyml/BecherG05, jsyml/BecherG09}. However,
as far as the usual prefix-free machines are concerned, the conjecture proved largely false
after Miller in 2004 produced
a $\Delta^0_2$ set $X$ such that 
$\Omega_U(X)$ is a rational number 
(see \cite{DBLP:journals/jsyml/BecherG05} for the announcement of
this result and \cite{jsyml/BecherFGM06} for a proof).
\BFGM \cite{jsyml/BecherFGM06} continued with a large number of negative results, showing that
$\Omega_U(X)$ is often less random than expected, given the complexity of $X$. In this same paper
there is also a positive result, showing that $\Omega_U(X)$ is 1-random when $X$ is
$\Sigma^0_n$-complete or $\Pi^0_n$-complete for some $n>1$. Nevertheless, they also show that
for each $n>1$ the number 
$\Omega_U(X)$ 
is not $n$-random for any 
$\Sigma^0_n$ or $\Pi^0_n$ set $X$.
On a positive note, Chaitin \cite{chaitin2004algorithmic} had already noticed that if $X$ is a $\Sigma^0_1$ set, 
then $\Omega_U(X)$ is \ml random. The case when $X$ is \pz is the basis of 
Problem \ref{3aCBk84Vgj} and has
remained open.
\FSW
\cite{jc/FigueiraSW06} constructed an optimal (in terms of Kolmogorov complexity) 
but not universal \pf machine $U$ and a \pz set $X$ such that  $\Omega_U(X)$ is not \ml random. 
They also suggested some strategies for giving a negative answer to Problem \ref{3aCBk84Vgj}.
In the present work, we see that Problem \ref{3aCBk84Vgj} has a surprising positive answer.

\section{Overview}
In this section we reduce all of the results in this paper to two technical lemmas, which
we prove in Sections \ref{fIxPfLMGKR} and \ref{KjaAxAR79b} respectively. This style of presentation
should make the content of this work readily accessible. 
In Section \ref{IIQvorJh4q} we derive the results of Section \ref{MXWRrIRDRZ} from
Lemma \ref{seclem}, whose proof is a delicate 
technical argument which is defferred to Section \ref{fIxPfLMGKR}.
Then in Section \ref{kJtnt1Dzg6} we give the solution of Problem 
\ref{3aCBk84Vgj}, using the results of Section \ref{MXWRrIRDRZ} and Lemma 
\ref{wd9LMyiLha}. The proof latter lemma is a {\em decanter argument}, 
which has some things in common
with the original decanter argument by \DHNS \cite{Downey.Hirschfeldt.ea:03}
and is given in Section  \ref{KjaAxAR79b}.

\subsection{Proof of Theorem  \ref{IXJbtDUFlf} and Theorem  \ref{AONSVJ5jVl}}\label{IIQvorJh4q}
\DHN \cite{Downey02randomness} showed that, given any two \lce reals $\alpha,\beta$,
the number $\DD(\alpha,\beta)$ is the limit infimum of $(\alpha-\alpha_s)/(\beta-\beta_s)$
with respect to all \lce increasing approximations $(\alpha_s)$, $(\beta_s)$ to $\alpha,\beta$ respectively.
Hence for Theorem \ref{AONSVJ5jVl} it suffices to show that in the case where $\alpha,\beta$
are \ml random, given
\lce increasing approximations $(\alpha_s)$, $(\beta_s)$ to $\alpha,\beta$ respectively,
 the limit of $(\alpha-\alpha_s)/(\beta-\beta_s)$ exists.
By the density of the rationals in the real line, this statement follows from the following
special case, which is proved in
Section \ref{fIxPfLMGKR}.
\begin{lem}  \label{seclem} 
Suppose $\alpha$ and $\beta$ are c.e.\ reals, that $\alpha$ is Martin-L\"{o}f random and that  $p\in \mathbb{Q}$ with $p>1$. Suppose given \lce approximations $(\alpha_s)$ and $(\beta_s)$ to $\alpha$ and $\beta$ respectively. Then the following two conditions cannot both hold: 
\begin{enumerate} 
\item There exist infinitely many $s$ with $p(\alpha -\alpha_s)<(\beta-\beta_s)$;
\item There exist infinitely many $s$ with $ (\beta-\beta_s)< (\alpha -\alpha_s)$. 
\end{enumerate}
\end{lem}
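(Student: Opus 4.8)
The plan is to argue by contradiction: assume both (1) and (2) hold, and build a Solovay test that captures $\alpha$, contradicting its \ml randomness. The intuition is that if the ratio of remaining errors $(\alpha-\alpha_s)/(\beta-\beta_s)$ dips below $1/p$ infinitely often and rises above $1$ infinitely often, then between a ``low-ratio'' stage and the next ``high-ratio'' stage the quantity $\beta$ must increase by a definite multiple of the simultaneous increase in $\alpha$, and this gives us a way to cover $\alpha$ by intervals of small total length using the increase in $\beta$ (which is bounded) as the budget. Concretely, I would first replace $\alpha,\beta$ by convenient approximations; by the remarks following Definition \ref{1DDkA32C33} and the discussion of Solovay reducibility in Section \ref{5OrQ9kRZxr}, we may assume $(\alpha_s),(\beta_s)$ are increasing with $\alpha_s\le\alpha$, $\beta_s\le\beta$ for all $s$, and we may pass to a subsequence of stages.

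The key step is to set up the right sequence of stages. Fix a rational $p'$ with $1<p'<p$. Using (2), choose an increasing sequence of stages $s_0<s_1<\cdots$ such that $\beta-\beta_{s_k}<\alpha-\alpha_{s_k}$; using (1), between $s_k$ and $s_{k+1}$ insert a stage $t_k$ with $s_k<t_k<s_{k+1}$ (possible by thinning) such that $p(\alpha-\alpha_{t_k})<(\beta-\beta_{t_k})$. Now compare what happens from stage $t_k$ forward: since $\beta-\beta_{s_{k+1}}<\alpha-\alpha_{s_{k+1}}$ while $p(\alpha-\alpha_{t_k})<\beta-\beta_{t_k}$, subtracting and using monotonicity we get that
\[
\beta_{s_{k+1}}-\beta_{t_k} \;=\; (\beta-\beta_{t_k})-(\beta-\beta_{s_{k+1}}) \;>\; p(\alpha-\alpha_{t_k}) - (\alpha-\alpha_{s_{k+1}}) \;>\; (p-1)(\alpha-\alpha_{t_k}),
\]
so the growth of $\beta$ over this block dominates $(p-1)$ times the distance from $\alpha_{t_k}$ to $\alpha$, and in particular dominates $(p-1)(\alpha_{s_{k+1}}-\alpha_{t_k})$. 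I would then define the Solovay test: for each $k$ put into the test the interval $I_k=(\alpha_{t_k},\ \alpha_{t_k}+\tfrac{1}{p-1}(\beta_{s_{k+1}}-\beta_{t_k}))$, an interval with rational endpoints that is enumerated effectively once stages $t_k,s_{k+1}$ are found. By the displayed inequality $\alpha\in I_k$ for every $k$ — here I must be slightly careful and either take the half-open block structure so the $I_k$ genuinely contain $\alpha$, or shrink/enlarge by a fixed rational factor to make room — and $\sum_k \mu(I_k)=\tfrac{1}{p-1}\sum_k(\beta_{s_{k+1}}-\beta_{t_k})\le \tfrac{1}{p-1}(\beta-\beta_{t_0})<\infty$ because the blocks $[\beta_{t_k},\beta_{s_{k+1}}]$ are pairwise disjoint (as $s_{k+1}<t_{k+1}$) inside the bounded interval $[\beta_{t_0},\beta]$. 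Thus $(I_k)$ is a Solovay test that $\alpha$ fails, contradicting \ml randomness of $\alpha$.

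The main obstacle I anticipate is the bookkeeping that makes the intervals $I_k$ both (a) effectively enumerable and (b) actually contain the real $\alpha$ rather than merely approximating it: the inequalities above bound $\alpha-\alpha_{t_k}$ from above by $\tfrac{1}{p-1}(\beta_{s_{k+1}}-\beta_{t_k})$ only after we know $s_{k+1}$, and at that finite stage we only know $\beta_{s_{k+1}}$, not $\beta$ — but since $\beta_{s_{k+1}}-\beta_{t_k}\le\beta-\beta_{t_k}$ this is fine, the bound we need is with the smaller known quantity on the left of the covering length, and the chain of strict inequalities gives the necessary slack (this is exactly why I introduced the auxiliary $p'$, or equivalently why I split off the factor $p-1$ rather than $p$). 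A secondary subtlety is ensuring the selection of $t_k$ strictly between $s_k$ and $s_{k+1}$ is legitimate: this just needs that condition (1) supplies infinitely many witnesses, so after fixing $s_k$ we can always wait for the next such witness and then let $s_{k+1}$ be a later stage satisfying (2). Once the test is correctly assembled, the contradiction with the Solovay-test characterisation of \ml randomness (Section \ref{9vkiTFyIef}) is immediate.
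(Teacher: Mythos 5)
Your key inequality is sound and is essentially the same observation that powers the paper's argument (compare (\ref{eq1})): between a stage $t$ witnessing condition (1) and a later stage $s'$ witnessing condition (2), the growth $\beta_{s'}-\beta_{t}$ exceeds $(p-1)(\alpha-\alpha_{t})$, so an interval of length $\tfrac{1}{p-1}(\beta_{s'}-\beta_{t})$ anchored at $\alpha_{t}$ contains $\alpha$, and the disjointness of the blocks $[\beta_{t_k},\beta_{s_{k+1}}]$ bounds the total length by $\tfrac{1}{p-1}\beta$. The containment and the measure accounting are both fine.

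The gap is effectiveness, and it is not the bookkeeping point you flag (using $\beta_{s_{k+1}}$ in place of $\beta$ is indeed harmless): the stages $t_k$ and $s_{k+1}$ cannot be \emph{found}. The predicate ``$p(\alpha-\alpha_s)<\beta-\beta_s$'' compares the stage-$s$ approximation against the limits $\alpha$ and $\beta$; it is in general a $\Sigma^0_2$ property of $s$, so no finite computation certifies that a given stage is a witness to (1) or to (2). Consequently ``wait for the next such witness'' is not an algorithm, and your $(I_k)$ is not a computably enumerable family of intervals, hence not a Solovay test. This is the entire technical content of the lemma, and it is why the paper's proof is long: it first strengthens (1) and (2) to the $\Pi^0_1$ conditions $(\ast 1)$ and $(\ast 2)$ (``for all $s'>s$, $p(\alpha_{s'}-\alpha_s)<\beta_{s'}-\beta_s$'', and dually), which can at least be \emph{refuted} at a finite stage, and then runs a movable-marker construction in which $x_i,y_i$ guess stages satisfying $(\ast 1)$ and $(\ast 2)$, intervals are enumerated on the strength of the current guesses, and injured guesses are handled by a correction factor $r$ that ensures re-enumeration only charges the genuinely new part of $\beta^{\ast}_{s+1}-\alpha^{\ast}_{s+1}$, keeping the total measure bounded by $\beta$. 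To complete your proof you would have to add this guessing-and-injury layer together with the accounting for measure spent on false guesses; as written, the argument does not go through.
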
 
Next, we show how to derive Theorem \ref{IXJbtDUFlf} from Lemma \ref{seclem}.
Suppose that $\alpha$ and $\beta$ are c.e.\ reals and that $\alpha$ is Martin-L\"{o}f random. In order to establish Theorem \ref{IXJbtDUFlf}, it suffices to show  that if $p,q\in \mathbb{Q}^+$ with $p>q$, then either $p\alpha-\beta$ is \lce or else $q\alpha -\beta$ is a \rce real.\footnote{Indeed, if
Theorem \ref{IXJbtDUFlf} did not hold, then
$\DD(\alpha,\beta)>\DD^{\ast}(\alpha,\beta)$. If we take $q<p$ between
$\DD^{\ast}(\alpha,\beta)$ and $\DD(\alpha,\beta)$ then we would have that
$p\alpha-\beta$ is not \lce and $q\alpha -\beta$ is not a \rce real.}  
In fact it suffices to establish this for the case $q=1$. 
Then, for general $p>q$ in $\mathbb{Q}^+$, if $q\alpha -\beta$ 
is not right c.e., neither is $\alpha-(\beta/q) $, which means that 
$(p/q)\alpha-(\beta/q)$ and thus $p\alpha -\beta$ must be a \lce real.  So to prove 
\ref{IXJbtDUFlf}, it suffices to establish the following fact. 
\begin{equation}\label{mainlem}
\parbox{12cm}{If $\alpha$ and $\beta$ are c.e.\ reals and $\alpha$ is Martin-L\"{o}f random, then for $p\in \mathbb{Q}$ with $p>1$, either $p\alpha-\beta$ is left-c.e., or else $\alpha-\beta$ is a \rce real.}
\end{equation}

Now we show how to obtain \eqref{mainlem} from Lemma \ref{seclem}. 
Consider \lce approximations $(\alpha_s)$ and $(\beta_s)$ to $\alpha$ and $\beta$ such that $\alpha$ is Martin-L\"{o}f random. There are then two cases to consider. 

{\em Case 1.} There do not exist infinitely many $s$ with $p(\alpha -\alpha_s)<(\beta-\beta_s)$. 
If there exists an $s$ with $p(\alpha -\alpha_s)=(\beta-\beta_s)$ then $p\alpha-\beta$ is rational and so is clearly a \lce real. Otherwise, we can find $s_0$ such that  $p(\alpha -\alpha_s)>(\beta-\beta_s)$ for all $s\geq s_0$. In order to define a \lce approximation $(\gamma_s)$ to $p\alpha-\beta$, define $\gamma_0= p\alpha_{s_0}-\beta_{s_0}$. Given $\gamma_i$ and $s_i$ (such that $p(\alpha -\alpha_{s_i})>(\beta-\beta_{s_i})$), find $s_{i+1}>s_i$ such that $ p(\alpha_{s_{i+1}}-\alpha_{s_i})>(\beta_{s_{i+1}}- \beta_{s_i})$, and define $\gamma_{i+1}= p\alpha_{s_{i+1}}-\beta_{s_{i+1}}$.

{\em Case 2.} There do exist infinitely many $s$ with $p(\alpha -\alpha_s)<(\beta-\beta_s)$. In this case Lemma \ref{seclem} tells us that there cannot exist infinitely many $s$ with $ (\beta-\beta_s)< (\alpha -\alpha_s)$. If there exists an $s$ with $ (\beta-\beta_s)=(\alpha -\alpha_s)$ then $\alpha-\beta$ is rational and so is a \rce real. Otherwise, let $s_0$ be such that  $(\alpha -\alpha_s)<(\beta-\beta_s)$ for all $s\geq s_0$. In order to define a right c.e. approximation $(\delta_s)$ to $\alpha-\beta$, define $\delta_0= \alpha_{s_0}-\beta_{s_0}$. Given $\delta_i$ and $s_i$ (such that $(\alpha -\alpha_{s_i})<(\beta-\beta_{s_i})$), find $s_{i+1}>s_i$ such that $( \alpha_{s_{i+1}}-\alpha_{s_i})<(\beta_{s_{i+1}}- \beta_{s_i})$, and define $\delta_{i+1}= \alpha_{s_{i+1}}-\beta_{s_{i+1}}$.

We have reduced Theorem  \ref{IXJbtDUFlf} and Theorem  \ref{AONSVJ5jVl} to
Lemma  \ref{seclem}, whose proof is given in Section \ref{fIxPfLMGKR}.

\subsection{Proof of Theorem \ref{S4IBp8JyfO}}
For clause (a) suppose that $\DD(\alpha,\beta)<1$. Then there exists
a rational number $q<1$ such that $q\alpha-\beta$ is a \lce real.
Recall that Demuth \cite{Dempseu} showed that the sum of a \ml random \lce real and
any other \lce real is \ml random. Therefore $\alpha-\beta$ is \ml random as the sum of
the \ml random \lce real $(1-q)\alpha$ and the \lce real $q\alpha-\beta$.

For clause (c) the argument is similar. If $\DD(\alpha,\beta)>1$ 
then by Theorem \ref{IXJbtDUFlf} we have
$\DD^{\ast}(\alpha,\beta)>1$. 
So there exists $p>1$ such that $p\alpha-\beta$ is a \rce real.
Recall that Demuth \cite{Dempseu} showed that the sum of a \ml random \rce real and
any other \rce real is \ml random. 
Hence $\alpha-\beta$ is a \ml random \rce real as the sum of the \ml random 
\rce real $(1-p)\alpha$ and the \rce real $p\alpha-\beta$.

For clause (b),  it suffices to prove the contrapositive. So assume that
$\alpha-\beta$ is \ml random. Then by \eqref{eq:rett}, 
$\alpha-\beta$ is either a \lce real or a \rce real.
Without loss of generality, assume that it is a \ml random \lce real. Then by
\DHN \cite{Downey02randomness} there exists a sufficiently small but positive rational $q$ such that
$(\alpha-\beta) -q\alpha$ is a \ml random \lce real. But this means that $\DD(\alpha,\beta)\leq 1-q$
so $\DD(\alpha,\beta)<1$.
In the case where
$\alpha-\beta$ 
is a \ml random \rce real, a similar argument gives that
$\DD(\alpha,\beta)>1$. In any case, $\DD(\alpha,\beta)\neq 1$, which concludes the proof.

\subsection{Proof of Theorem \ref{JgIfqiIPd}}\label{kJtnt1Dzg6}
Let us briefly note that \eqref{A69imMhIZ}
is a direct consequence of Theorem \ref{JgIfqiIPd} and \eqref{eq:rett}.\footnote{If \eqref{A69imMhIZ}
did not hold, then either there are infinitely many terms of the sequence $(\alpha_s)$ on both sides
of $\Omega_U(X)$, or almost all terms are larger than $\Omega_U(X)$. In the first case, by 
\eqref{eq:rett} we get that $\Omega_U(X)$ is not \ml random, which contradicts Theorem \ref{JgIfqiIPd}.
In the second case we get that $\Omega_U(X)$ is \rce and since by Theorem \ref{JgIfqiIPd} it is also 
\lce it has to be computable. But then again this contradicts Theorem \ref{JgIfqiIPd}.}
We are going to reduce Theorem \ref{JgIfqiIPd} to the following two lemmas, of which
the first one has a self-contained proof presented in  Section \ref{KjaAxAR79b}, and the latter one
is based on the results of Section \ref{MXWRrIRDRZ}.

\begin{lem}\label{wd9LMyiLha}
If $X$ is a \pz set and \omux is a \rce real then
 \omux is not \ml random.
\end{lem}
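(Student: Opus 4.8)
First I would dispose of the degenerate cases: if $X=\emptyset$ then $\Omega_U(X)=0$, which is not \ml random, and if $\overline{X}=\emptyset$ then $\Omega_U(X)=\Omega_U$, which is \ml random and hence not \rce, so the hypothesis fails. So assume $X$ and $\overline{X}$ are both nonempty, whence $\Omega_U(X)>0$ and $\Omega_U(\overline{X})>0$ since $U$ is universal. Fix a finite increasing enumeration $\overline{X}=\bigcup_s\overline{X}_s$, the standard lower approximations $\Omega_{U,s}\nearrow\Omega_U$ and $L_s:=\sum_{U_s(\sigma)\de\in\overline{X}_s}2^{-|\sigma|}\nearrow\Omega_U(\overline{X})$, and --- using the hypothesis --- a computable non-increasing sequence $\rho_s\searrow\Omega_U(X)$. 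Put $R_s:=\Omega_{U,s}-L_s$; then $R_s\to\Omega_U(X)$ and $R_s\ge 0$ throughout, $R_s$ goes up exactly when a new $U$-computation with current output outside $\overline{X}_s$ appears (``pouring''), and $R_s$ goes down exactly when some string $y$ currently believed to be in $X$ enters $\overline{X}$ (a ``drain'' of $\sum_{U_s(\sigma)\de=y}2^{-|\sigma|}$). Two accounting facts drive everything. (i) The total weight ever drained is at most $\Omega_U(\overline{X})<1$, since each drained quantum is charged to a distinct $U$-program with output in $\overline{X}$; equivalently, the total weight ever poured equals $\Omega_U(X)$ plus the total drained weight, hence is at most $\Omega_U<1$. (ii) If $R_s>\rho_s$ at some stage $s$ then at least $R_s-\rho_s$ of draining is still to come: writing $P_s,D_s$ for the weight poured, resp.\ drained, after stage $s$, one has $\Omega_U(X)=R_s+P_s-D_s$, and $P_s\ge 0$ together with $\Omega_U(X)\le\rho_s$ gives $D_s\ge R_s-\rho_s$. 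So the \rce ``ceiling'' $(\rho_s)$ turns every overshoot of $R_s$ above $\Omega_U(X)$ into a forced future expenditure from the globally bounded reservoir of~(i).

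With (i) and (ii) in hand, the plan is to build a \ml test $(V_k)$ with $\Omega_U(X)\in\bigcap_k V_k$ by a decanter argument in the style of \DHNS \cite{Downey.Hirschfeldt.ea:03}. The weight of each newly discovered ``believed-$X$'' computation is poured through a cascade of containers; a drain is treated as ``garbage'' that empties weight from the containers; and whenever a container at an appropriate depth acts, one enumerates into $V_k$ a short interval around the current position of the approximation --- an interval $[\,\rho_s-\theta_k,\rho_s\,]$, which contains $\Omega_U(X)$ as soon as $\rho_s-\Omega_U(X)\le\theta_k$, or an interval $[\,R_s-\theta_k,R_s\,]$. Fact (i) caps the total garbage, hence --- through the cascade --- the total enumerated measure; fact (ii) is what forces the drains that repeatedly bring the approximation close to $\Omega_U(X)$, so a capturing interval is enumerated at each level before the supply of weight to pour runs out. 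As in \cite{Downey.Hirschfeldt.ea:03}, it is the nesting of the containers over the levels that upgrades the single constant bound coming from~(i) to the required $\mu(V_k)\le 2^{-k}$.

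The crux --- the analogue of the golden-run lemma of \cite{Downey.Hirschfeldt.ea:03} --- is to make these two demands simultaneously satisfiable: (a) $\Omega_U(X)$ must land in $V_k$ for \emph{every} $k$, which needs (ii) to keep forcing drains, and hence promotions, rather than the approximation settling quietly just above $\Omega_U(X)$ and stalling the cascade; and (b) the measure enumerated at level $k$ must be $\le 2^{-k}$, even though the reservoir in (i) is bounded only by a constant below $1$, not shrinking. Choosing the thresholds $\theta_k$ and organising the promotion/drain bookkeeping so that the garbage at no level accumulates enough to inflate the measure is where I expect the real work to lie. Finally, I would note that this combinatorial detour is genuinely needed: writing $\Omega_U(X)=\Omega_U-\Omega_U(\overline{X})$ as a difference of \ml random \lce reals and applying Theorem~\ref{S4IBp8JyfO} reduces the lemma exactly to the case $\DD(\Omega_U,\Omega_U(\overline{X}))>1$, in which $\Omega_U(X)$ \emph{is} a \ml random \rce real and (by \eqref{eq:rett}) $R_s$ lies above $\Omega_U(X)$ from some stage on --- so there is no oscillation of $R_s$ to exploit, and the \pz structure of $X$ has to be used precisely through the drain/ceiling mechanism above.
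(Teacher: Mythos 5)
Your setup is sound and genuinely close in spirit to the paper's: the d.c.e.\ approximation $R_s=\Omega_{U,s}-L_s$, the pour/drain dichotomy, accounting facts (i) and (ii), and your closing observation that the lemma cannot follow from Theorem~\ref{S4IBp8JyfO} alone (because the case $\DD>1$ must be excluded using the \pz structure of $X$) are all correct. But the proposal stops exactly where the proof is. You write that organising the bookkeeping ``so that the garbage at no level accumulates enough to inflate the measure is where I expect the real work to lie,'' and you never supply the mechanism that upgrades the constant bound of (i) to a bound tending to $0$ with the level. That mechanism is the paper's Lemma~\ref{omQJNhsB2D}, and it is not the golden-run amplification of \DHNS \cite{Downey.Hirschfeldt.ea:03}; it is a disjointness argument. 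In the paper's decanter, every quantum placed at level $\CC_k$ \emph{flows from} a just-purged portion of equal weight at level $\CC_{k-1}$, while the quantum itself represents \emph{fresh} weight in $\Omega_U$ (newly convergent computations), disjoint from the weight represented at levels $0,\dots,k-1$. Hence the total weight ever passing through the levels is non-increasing in $k$, the weights through distinct levels are pairwise disjoint portions of $\Omega_U$, and therefore $(k+1)\cdot(\text{weight through }\CC_k)\le\Omega_U<1$, giving the bound $1/(k+1)$. Without this (or an equivalent) your cascade only yields components of measure $O(1)$ at every level, which is not a \ml test.

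The second gap is the verification that $\Omega_U(X)$ lands in every level. The paper first reduces, via \eqref{eq:rett} and a short case analysis, to the situation where the natural approximation $\Omega_U(X)[s]$ is eventually strictly above its irrational limit, extracts a decreasing subsequence $(\omega_i)$, enumerates $(\omega_s-\delta_i,\omega_s)$ only at $i$-valid stages (no eviction below level $i$), halves $\delta_i$ at each injury, and proves separately (Lemma~\ref{mOgXvfMWGJ}) that each level receives only finitely many quanta, so each $U_i$ has a final uninjured cycle that captures $\Omega_U(X)$; the measure of $U_i$ is then a geometric sum plus the weight evicted from levels $\ge i$, at most $1/(i+1)$, and $V_i=U_{2^{i+1}}$ finishes. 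Your fact (ii), with the external ceiling $(\rho_s)$, is a plausible substitute for this reduction, but as written neither demand (a) nor (b) of your own ``crux'' is established: the proposal is a correct plan whose two central lemmas are left unproved.
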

The following lemma can be proved using Theorem \ref{S4IBp8JyfO}, and is
of independent interest.

\begin{lem}\label{jzaprnNjHH}
If $\delta$ is a \dce real which is not \ml random and $\alpha$ is a \ml random \lce real then
$\alpha+\delta$ is a \ml random \lce real.
\end{lem}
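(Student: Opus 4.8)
The plan is to realise $\alpha+\delta$ as a difference of two \ml random \lce reals and to read off its nature from the trichotomy of Theorem \ref{S4IBp8JyfO}. Since $\delta$ is a \dce real, fix \lce reals $\mu',\nu'$ with $\delta=\mu'-\nu'$, fix any \ml random \lce real $\rho$, and put $\mu:=\mu'+\rho$ and $\nu:=\nu'+\rho$. Then $\mu-\nu=\delta$, while by Demuth's theorem \cite{Dempseu} (the sum of a \ml random \lce real with any \lce real is again a \ml random \lce real; see Section \ref{5OrQ9kRZxr}) the reals $\mu$, $\nu$ and $\alpha+\mu$ are all \ml random \lce reals. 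In particular $\alpha+\delta=(\alpha+\mu)-\nu$ is a difference of two \ml random \lce reals.

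Now apply Theorem \ref{S4IBp8JyfO} to the pair $(\mu,\nu)$. As $\mu-\nu=\delta$ is \emph{not} \ml random, clauses (a) and (c) are impossible, so clause (b) holds and $\DD(\mu,\nu)=1$. The key step is to check that grafting $\alpha$ onto the first coordinate carries this value \emph{strictly} below $1$, that is, $\DD(\alpha+\mu,\nu)<1$. That $\DD(\alpha+\mu,\nu)\le\DD(\mu,\nu)$ is immediate from the definition, since if $q\mu-\nu$ is \lce then so is $q(\alpha+\mu)-\nu=(q\mu-\nu)+q\alpha$. For strictness, fix monotone \lce approximations $(\alpha_s),(\mu_s),(\nu_s)$. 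By Theorem \ref{AONSVJ5jVl} together with $\DD(\mu,\nu)=1$, the tails $\mu-\mu_s$ and $\nu-\nu_s$ are asymptotically equal, so for every rational $\epsilon>0$ we have $\nu-\nu_s<(1+\epsilon)(\mu-\mu_s)$ for all large $s$; and since $\alpha$ and $\mu$ are both \ml random, hence Solovay equivalent, Theorem \ref{AONSVJ5jVl} also gives a fixed rational $c>0$ with $\alpha-\alpha_s\ge c(\mu-\mu_s)$ for all large $s$. Take $\epsilon:=c/2$ and a rational $q$ with $\tfrac{1+\epsilon}{1+c}<q<1$ (possible since $\epsilon<c$). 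Then for all large $s$,
\[
q\big((\alpha-\alpha_s)+(\mu-\mu_s)\big)\ \ge\ q(1+c)(\mu-\mu_s)\ >\ (1+\epsilon)(\mu-\mu_s)\ >\ \nu-\nu_s ,
\]
so the sequence $q(\alpha_s+\mu_s)-\nu_s$ eventually stays strictly below its limit $q(\alpha+\mu)-\nu$; extracting a subsequence exactly as in Case~1 of the derivation of \eqref{mainlem} (Section \ref{IIQvorJh4q}) produces an increasing computable approximation, whence $q(\alpha+\mu)-\nu$ is \lcen, and therefore $\DD(\alpha+\mu,\nu)\le q<1$.

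It remains to apply Theorem \ref{S4IBp8JyfO}(a) to the pair $(\alpha+\mu,\nu)$ of \ml random \lce reals: since $\DD(\alpha+\mu,\nu)<1$, the difference $(\alpha+\mu)-\nu=\alpha+\delta$ is a \ml random \lce real, as claimed. I expect the crux to be the strict inequality $\DD(\alpha+\mu,\nu)<1$, i.e.\ the fact that the \lce real $\alpha$ we add is ``non-negligible'' relative to $\mu$ and $\nu$ — which is precisely the Solovay equivalence of \ml random \lce reals (equivalently, the finiteness of $\DD(\alpha,\nu)$) — so that it moves $\DD$ off the critical value $1$ rather than leaving it there; the remaining steps are routine bookkeeping with Demuth's theorem and Theorem \ref{S4IBp8JyfO}.
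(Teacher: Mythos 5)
Your proof is correct. It shares its skeleton with the paper's argument --- write $\delta$ as a difference $\mu-\nu$ of \ml random \lce reals by adding a random \lce real to both halves, deduce $\DD(\mu,\nu)=1$ from Theorem \ref{S4IBp8JyfO}, and then show that absorbing $\alpha$ pushes the relevant coefficient strictly past the critical value $1$ --- but the key step is carried out by a different route. The paper invokes the decomposition theorem of \cite{Downey02randomness} (Solovay completeness in its algebraic form) to write $\alpha=2^{-c}\mu+\beta$ with $\beta$ \lcen, so that $\alpha+\delta=\beta+\bigl((1+2^{-c})\mu-\nu\bigr)$; the bracketed real is a \ml random \lce real because its coefficient exceeds $\DD(\mu,\nu)=1$, and Demuth's theorem finishes. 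You instead prove $\DD(\alpha+\mu,\nu)<1$ directly, applying Theorem \ref{AONSVJ5jVl} twice: once to $(\mu,\nu)$ to make the tails asymptotically comparable, and once to $(\alpha,\mu)$ to extract the positive constant $c$ with $\alpha-\alpha_s\geq c(\mu-\mu_s)$, which is exactly what makes the inequality strict; Theorem \ref{S4IBp8JyfO}(a) then applies to the pair $(\alpha+\mu,\nu)$. The two routes rest on the same underlying fact --- Solovay completeness of \ml random \lce reals, which is both what makes $\DD(\alpha,\mu)$ positive and finite and what yields the paper's decomposition --- so neither is more general; the paper's version is shorter because the decomposition does the bookkeeping for you, while yours is more self-contained given Theorem \ref{AONSVJ5jVl} and makes explicit where the strictness comes from. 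Your estimates and the subsequence extraction in the style of Case~1 of \eqref{mainlem} are sound, and there is no circularity, since Theorems \ref{S4IBp8JyfO} and \ref{AONSVJ5jVl} are established independently of Lemma \ref{jzaprnNjHH}.
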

\begin{proof}
Let $\delta=\delta_0-\delta_1$. We can suppose that $\delta_0$ and $\delta_1$ are \ml random (otherwise add a random \lce real to both). Since $\delta$ is not \ml random we have $\DD(\delta_0,\delta_1)=1$, 
by Theorem \ref{S4IBp8JyfO}.
Since $\alpha$ is a \ml random \lce real, by  \DHN \cite{Downey02randomness}
(see the discussion in Section \ref{5OrQ9kRZxr})
there exists some $c$ and another \lce real $\beta$ such that 
$\alpha=2^{-c}\cdot \delta_0+\beta$. So 
$\alpha+\delta=\beta+(1+2^{-c})\cdot \delta_0-\delta_1$.
But since  $\DD(\delta_0,\delta_1)=1$ the real $(1+2^{-c})\cdot \delta_0-\delta_1$
is \lce and \ml random (again by \cite{Downey02randomness}). So $\alpha+\delta$ is also  \lce and \ml random, which completes the proof.
\end{proof}
We are now ready to prove Theorem \ref{JgIfqiIPd}, using Lemma
\ref{wd9LMyiLha} and Lemma \ref{jzaprnNjHH}.
Let $x\in X$ and consider $Y=X-\{x\}$ which is a \pz set. Then 
$\Omega_U(X)=\Omega_U(Y)+\Omega_U(\{x\})$.
Note that $\Omega_U(\{x\})$ is a \ml random \lce real.
By Lemma \ref{wd9LMyiLha} the number
$\Omega_U(Y)$ is one of the following:
\begin{enumerate}[\hspace{0.5cm}(a)]
\item a \rce real which is not \ml random;
\item a proper \dce real (hence, by \eqref{eq:rett}, not \ml random);
\item a \lce real.
\end{enumerate}
In the first two cases $\Omega_U(Y)$ is a \dce real which is not \ml random, so 
by Lemma \ref{jzaprnNjHH} and the fact that $\Omega_U(\{x\})$ is a \ml random \lce real
we have that $\Omega_U(X)$ is a \ml random \lce real. In  the last case
 $\Omega_U(Y)$ is a \lce real so $\Omega_U(X)$ is again a \ml random \lce real as the sum of
 a \lce real and a random \lce real, which concludes the proof.

\section{Proof of Lemma \ref{seclem}}\label{fIxPfLMGKR}
Towards a contradiction, suppose that $\alpha,\beta,p,(\alpha_s)$ and $(\beta_s)$ are as in the statement of the Lemma \ref{seclem}, and that conditions (1) and (2) both hold. We shall show how to enumerate a Solovay test establishing that $(p-1)\alpha$ is not random. 
It will be convenient to make use of the following notation. 

\begin{defi} 
 For any given $s>0$, let $\alpha^{\ast}_s=\alpha_s-\alpha_{s-1}$. For $s_1\geq s_0$ define $\alpha^{\ast}_{s_0,s_1}=\alpha_{s_1}-\alpha_{s_0}$, and  $\alpha^{\ast}_{s_0,\infty}=\alpha -\alpha_{s_0}$. Also define $\beta^{\ast}_s$, $\beta^{\ast}_{s_0,s_1}$ and $\beta^{\ast}_{s_0,\infty}$ similarly.
\end{defi}

From the fact that condition (1) holds, we can actually deduce that there exist infinitely many $s$ for which a stronger condition holds: 

\begin{enumerate} 
\item[$(\ast 1)$] For all $s'>s$, $p\alpha^{\ast}_{s,s'}<\beta^{\ast}_{s,s'}$. 
\end{enumerate}
In order to see this, let $s_0$ be such that $p\alpha^{\ast}_{s_0,\infty}<\beta^{\ast}_{s_0,\infty}$. 
If  $(\ast 1)$ holds for $s_0$ then we have found
a stage $\geq s_0$ which satisfies $(\ast 1)$. Otherwise,
let $s\geq s_0$ be minimal such that, for all $s'>s$, 
$p\alpha^{\ast}_{s_0,s'}<\beta^{\ast}_{s_0,s'}$. This means that 
$p\alpha^{\ast}_{s_0,s}\geq \beta^{\ast}_{s_0,s}$. 
If $s'>s$ and $p\alpha^{\ast}_{s,s'}\geq \beta^{\ast}_{s,s'}$, 
then $p\alpha^{\ast}_{s_0,s'}=p(\alpha^{\ast}_{s_0,s}+\alpha^{\ast}_{s,s'})
\geq \beta^{\ast}_{s_0,s}+\beta^{\ast}_{s,s'}=\beta^{\ast}_{s_0,s'}$, 
a contradiction. So $s$ is a stage $\geq s_0$ which satisfies $(\ast 1)$. 
In each case, we  can conclude that there exists
a stage $\geq s_0$ which satisfies $(\ast 1)$.

From (2) we can similarly deduce that there are infinitely many $s$ for which the following stronger condition holds: 

\begin{enumerate} 
\item[$(\ast 2)$] For all $s'>s$, $\beta^{\ast}_{s,s'}<\alpha^{\ast}_{s,s'}$. 
\end{enumerate}

In order to see this, let $s_0$ be such that 
$\beta^{\ast}_{s_0,\infty}<\alpha^{\ast}_{s_0,\infty}$. 
If  $(\ast 2)$ holds for $s_0$ then we have found
a stage $\geq s_0$ which satisfies $(\ast 2)$. Otherwise,
let $s\geq s_0$ be minimal such that, for all 
$s'>s$, $\beta^{\ast}_{s_0,s'}<\alpha^{\ast}_{s_0,s'}$. This means that 
$\beta^{\ast}_{s_0,s} \geq \alpha^{\ast}_{s_0,s}$. 
If $s'>s$ and $ \beta^{\ast}_{s,s'}\geq \alpha^{\ast}_{s,s'}$, 
then $\alpha^{\ast}_{s_0,s'}=\alpha^{\ast}_{s_0,s}+
\alpha^{\ast}_{s,s'}\leq \beta^{\ast}_{s_0,s}+
\beta^{\ast}_{s,s'}=\beta^{\ast}_{s_0,s'}$, a contradiction. 
So $s$ is a stage $\geq s_0$ which satisfies $(\ast 2)$. 
In each case, we  can conclude that there exists
a stage $\geq s_0$ which satisfies $(\ast 2)$.

 Without loss of generality we may suppose that $s=0$ is one of those stages for which $(\ast 1)$ holds 
(otherwise begin the enumeration at some later stage). 

\subsection{Proof idea}

The key to the proof lies in the following basic observation.
Suppose that $s_1>0$ is the least stage for which $(\ast 2)$ holds.  Let $\beta^1=\beta^{\ast}_{0,s_1}$, $\beta^2=\beta^{\ast}_{s_1,\infty}$ and $\alpha^1=\alpha^{\ast}_{0,s_1}$, $\alpha^2=\alpha^{\ast}_{s_1,\infty}$ (where 1 and 2 are used as indices rather than to denote exponentiation). Then we have: 
\begin{equation} 
\beta^1+\beta^2>p(\alpha^1 +\alpha^2), \ \ \ \ \beta^2<\alpha^2,
\end{equation}
which gives,
\begin{equation}  \label{eq1}
\beta^1-\alpha^1>(p-1)(\alpha^1+\alpha^2). 
\end{equation}

In order to begin enumerating a Solovay test which will be failed by $(p-1)\alpha$, (\ref{eq1}) suggests that we could start by enumerating intervals in the following fashion. Let $a_1=(p-1)\alpha_0$ and $b_1=a_1+(\beta^{\ast}_1-\alpha^{\ast}_1)$. At stage 1 we enumerate the interval $[a_1,b_1]$. At later stages $s$, we could consider the last interval $[a_i,b_i]$ enumerated, and look to see whether $a_1+(\beta^{\ast}_{0,s}-\alpha^{\ast}_{0,s})>b_i$. If so, then we could define $a_{i+1}=b_i$, $b_{i+1}=a_1+(\beta^{\ast}_{0,s}-\alpha^{\ast}_{0,s})$ and enumerate this into our Solovay test. What happens if we proceed in this way? The first thing to observe is that, by the end of stage $s_1$ (recall that $(\ast 2)$ holds for $s_1$),  (\ref{eq1}) ensures $(p-1)\alpha$ meets at least one element of the test enumerated thus far.  What then happens after stage $s_1$? The fact that $s_1$ satisfies $(\ast 2)$ means that at every stage $s>s_1$ we will find that $a_1+(\beta^{\ast}_{0,s}-\alpha^{\ast}_{0,s})=a_1+(\beta^{\ast}_{0,s_1}-\alpha^{\ast}_{0,s_1})+(\beta^{\ast}_{s_1,s}-\alpha^{\ast}_{s_1,s})<a_1+ (\beta^{\ast}_{0,s_1}-\alpha^{\ast}_{0,s_1})$. So the rules described so far mean that we will not enumerate any further elements into our Solovay test attempting to capture our first initial segment of $(p-1)\alpha$ after stage $s_1$.  This is good news, because to do so would be wasteful -- we have already captured one initial segment of $(p-1)\alpha$, and subsequent enumerations into the test should be aimed at capturing another initial segment. 

During the construction of our Solovay test we shall place movable markers $x_0<y_0<x_1<y_1<\dots $ on stages, which are our present guesses as to stages which satisfy the conditions $(\ast 1)$ and $(\ast 2)$.  We shall have $x_0=0$, since we have assumed that $s=0$ satisfies $(\ast 1)$. If $y_0\downarrow (>0)$ then our present guess is that this is a stage which satisfies $(\ast 2)$, while $x_1$ may be placed on a stage which we believe satisfies $(\ast 1)$ again, and so on.  Suppose that at some stage $s$, we have $y_0=s_1$ and $x_1\downarrow >y_0$. Then at stages $s'>x_1$ we shall enumerate elements into our Solovay test of the form $[a_1',b_1'],[a_2',b_2'],\dots$ but now beginning with $a_1'=(p-1)\alpha_{x_1}$. Now for the sake of argument, suppose that $s_1$ \emph{does not} in fact satisfy $(\ast 2)$ -- that it merely appeared to be a stage satisfying this property, up until some later stage $s>x_1$. At this point it might initially seem that we have a difficulty, because the measure we have enumerated into the sequence $[a_1',b_1'],[a_2',b_2'],\dots$ should actually be measure we enumerate into the first sequence $[a_1,b_1],[a_2,b_2],\dots$ attempting to capture our first initial segment of $(p-1)\alpha$. Thus it may seem that we are threatened with having to enumerate certain measure into our Solovay test twice.  The fact that $x_1$ appeared to satisfy $(\ast 2)$ up until stage $s$, however, means that this is not a problem. We have $\beta^{\ast}_{s_1,s-1}-\alpha^{\ast}_{s_1,s-1}<0$, while $\beta^{\ast}_{s_1,s}-\alpha^{\ast}_{s_1,s}\geq 0$. One way to look at this is to see that, for some $r$ with $0<r\leq 1$ we have  $(\beta^{\ast}_{s_1,s-1}-\alpha^{\ast}_{s_1,s-1}) + r(\beta^{\ast}_s-\alpha^{\ast}_s)=0$, and that enumerating the measure $\beta^{\ast}_{s_1,s}-\alpha^{\ast}_{s_1,s}$ into our Solovay test actually only requires us to enumerate the new measure $(1-r)(\beta^{\ast}_s-\alpha^{\ast}_s)$. 

At stage $s+1$ of the construction, we shall say that $x_i$ \emph{requires moving} if $x_i\downarrow $ and $\beta^{\ast}_{x_i,s+1}\leq p\alpha^{\ast}_{x_i,s+1}$. We say that $y_i$ requires moving if $y_i\downarrow $ and $\beta^{\ast}_{y_i,s+1}\geq \alpha^{\ast}_{y_i,s+1}$. When we enumerate an interval into the Solovay test, it will always be the case that we associate that element of the Solovay test with one of the markers $x_i$. If the marker $x_i$ is subsequently made undefined (because it or a marker of higher priority requires moving), then we cannot change the fact that elements of the Solovay test already associated with $x_i$ have been enumerated, but we no longer consider them to be associated with $x_i$, i.e.\ the list of intervals associated with $x_i$ is initialised. 

\subsection{Construction} 
At stage 0, place the marker $x_0$ on 0. 

At stage $s+1$, let $z$ be the least number on which a marker is placed and which requires moving, or if there exists no such number, then leave $z$ undefined. There are three cases to consider. 

\begin{enumerate}[\hspace{0.5cm}(1)]
\item $z\downarrow =x_i$ for some $i$. In this case, make $x_{i'}$ and $y_{i'}$ undefined for all $i'\geq i$, and go to the next stage. 
\item $z\downarrow =y_i$ for some $i$.  Let  $r$ be such that  $(\beta^{\ast}_{y_i,s}-\alpha^{\ast}_{y_i,s}) + r(\beta^{\ast}_{s+1}-\alpha^{\ast}_{s+1})=0$. Let $[a,b]$ be the most recently enumerated interval associated with $x_i$. Enumerate the interval $[b,b+(1-r)(\beta^{\ast}_{s+1}-\alpha^{\ast}_{s+1})]$ into the Solovay test, and associate it with $x_i$. Make $y_i$ undefined, and make all $x_{i'}$ and $y_{i'}$ undefined for $i'>i$. Go to the next stage. 
\item $z\uparrow$. In this case, let $w$ be the greatest number on which a marker is placed. There are then two subcases. 
\begin{enumerate}[(a)]
\item  $w=x_i$ for some $i$. If $\beta^{\ast}_{s+1}\geq \alpha^{\ast}_{s+1}$, then let $[a,b]$ be the most recently enumerated interval associated with $x_i$, or let $b=(p-1)\alpha_0$ if there exists no such, and enumerate the interval $[b, b+(\beta^{\ast}_{s+1} -\alpha^{\ast}_{s+1})]$ into the Solovay test, associating it with $x_i$. If $\beta^{\ast}_{s+1}<\alpha^{\ast}_{s+1}$ then place the marker $y_i$ on $s$. In either case,  go to the next stage. 
\item $w=y_i$ for some $i$. In this case, if $\beta^{\ast}_{s+1}>p \alpha^{\ast}_{s+1}$ proceed as follows, and otherwise proceed immediately to the next stage. Place the marker $x_{i+1}$ on $s$. Enumerate the interval 
\[
[(p-1)\alpha_s, (p-1)\alpha_s + (\beta^{\ast}_{s+1} - \alpha^{\ast}_{s+1})]
\]
into the Solovay test, and associate it with $x_{i+1}$. Go to the next stage. 
\end{enumerate}

\end{enumerate}

\subsection{Verification} 
Since the total measure enumerated into the Solovay test is bounded by $\beta$, it must be finite. It is also clear that if a marker is placed at some stage $s$ and never subsequently moved, then $s$ must satisfy $(\ast 1)$ if the marker is $x_i$ for some $i$, and $s$ must satisfy $(\ast 2)$ if the marker is $y_i$ for some $i$.

 In order to see that each marker is eventually permanently placed, suppose inductively that $x_0,y_0,\dots, x_i,y_i$ are all correctly placed at some stage $y_i +1$ (meaning that the marker $y_i$ has just been placed correctly). Let $s>y_i$ be the least such that $(\ast 1)$ holds for $s$, and consider what happens at stage $s+1$. If the marker $x_{i+1}$ is already placed on a number $<s$, then it must be the case that $\beta^{\ast}_{x_{i+1},s}>p\alpha^{\ast}_{x_{i+1},s}$, otherwise the marker $x_{i+1}$ would have been moved prior to stage $s+1$. But then, for any $s'>s$,  $\beta^{\ast}_{x_{i+1},s'}= \beta^{\ast}_{x_{i+1},s}+\beta^{\ast}_{s,s'}>p(\alpha^{\ast}_{x_{i+1},s}+\alpha^{\ast}_{s,s'})= p\alpha^{\ast}_{x_{i+1},s'}$, and so $x_{i+1}$ satisfies $(\ast 1)$, contradicting the minimality of $s$. Thus $x_{i+1}$ cannot be placed at the start of stage $s+1$, and will be placed on $s$ at this stage. 
 
 Now suppose that $x_0,y_0,\dots,x_{i}$ have been placed correctly, let $s$ be the least $>x_{i}$ satisfying $(\ast 2)$, and consider what happens at stage $s+1$.  If the marker $y_{i}$ is already placed on a number $<s$, then it must be the case that $\beta^{\ast}_{y_i,s}<\alpha^{\ast}_{y_i,s}$, otherwise the marker $y_i$ moved have been moved prior to stage $s+1$. But then, for any $s'>s$,  $\beta^{\ast}_{y_i,s'}= \beta^{\ast}_{y_i,s}+\beta^{\ast}_{s,s'}<\alpha^{\ast}_{y_i,s}+\alpha^{\ast}_{s,s'}= \alpha^{\ast}_{x_i,s'}$, and so $y_i$ satisfies $(\ast 2)$, contradicting the minimality of $s$. Thus $y_i$ cannot be placed at the start of stage $s+1$, and will be placed on $s$ at this stage.
 
 Finally we note that once each $x_i$ is correctly placed, an interval to which $(p-1)\alpha$ belongs is subsequently enumerated into the Solovay test and associated with $x_i$. This follows because, letting $y_i$ take its final value, intervals are enumerated into the test and associated with $x_i$ covering the entire interval $[(p-1)\alpha_{x_i},   (p-1)\alpha_{x_i} +(\beta^{\ast}_{x_i,y_i}-\alpha^{\ast}_{x_i,y_i})]$. It follows from (\ref{eq1}) that $(p-1)\alpha$ belongs to this interval.  

\section{Proof of Lemma \ref{wd9LMyiLha}}\label{KjaAxAR79b}
Let $X$ be a \pz set. Given the existence of effective bijections between $\Nat$ and $2^{<\omega}$, it does not matter whether we consider $X$ as a subset of $\Nat$ or $2^{<\omega}$, and so for convenience we shall think of $X$ as a set of natural numbers. We must show that if \omux is \rce then it cannot be \ml random.   If $\Nat-X$ is finite then
\omux is left-c.e., so if it was \rce then it would be computable and so not \ml random. 
We may therefore assume that $\Nat-X$ is infinite.  If \omux  is rational then it is clearly not \ml random, as required.
So we may assume that \omux  is not rational.
Let $(X_s)$ be a \pz approximation to $X$ and consider the sequence of rationals
$(\Omega_U(X)[s])$ which converges to \omuxn.
If all but finitely many terms of  $(\Omega_U(X)[s])$ lie to the left of
the limit \omux then  \omux  is a \lce real, so if \rce it would be computable and so not \ml random.  If there are infinitely many terms of
$(\Omega_U(X)[s])$ on either side of the limit \omux then
\omux is not \ml random by \eqref{eq:rett}, as required.
So it remains to consider the case where 
all but finitely many terms of  $(\Omega_U(X)[s])$ lie to the right of
the limit \omuxn. In this case, since \omux is not rational, there exists an increasing 
sequence of stages $(s_i)$
such that $\Omega_U(X)[s_{i+1}]<\Omega_U(X)[s_{i}]$ for all $i$.
In particular, the terms $\Omega_U(X)[s_{i}]$ form a decreasing sequence which converges
to \omuxn. Define $\omega_i=\Omega_U(X)[s_{i}]$ for each $i$. It remains to show that in this case
 \omux is not \ml random. We will do this by constructing
a \ml test $(U_i)$ such that $\omuxf\in U_i$ for all $i$.

\subsection{Description of the decanter}
At each stage $s_{i+1}$ we consider the \emph{weight} (i.e.\ measure in the domain of $\Omega_U$) which has \emph{come in}, as well as the weight which has \emph{gone out}, since stage $s_{i}$. 
 The weight that has come in consists of the weight added into
 $\Omega_U$ during the stages in $(s_i, s_{i+1}]$, on numbers in $X_{s_{i+1}}$. So the weight that has come in is $\sum 2^{-|\sigma|}$, where the sum is over all $\sigma$ for which we have seen new computations $U(\sigma)\downarrow =n$ for $n\in X_{s_{i+1}}$. The weight that goes
out during the same interval, consists of the weight accumulated in $\Omega_U$ on numbers in 
$X_{s_{i}}-X_{s_{i+1}}$, \ie $\Omega_U(\{n\})[s_i]$ for numbers $n$ that were removed from the $\Pi^0_1$ approximation to $X$ during the 
stages in $(s_i, s_{i+1}]$. Note that $\omega_{i}-\omega_{i+1}$ is exactly the difference between the
weight that has gone out and the weight that has come in.  In particular, {\em the weight that comes in is at most the weight that goes out}. 

The initial weight  $\Omega_U(X)[s_0]$, as well as the weight that comes in at  later stages $s_{i+1}$,
will be dynamically split into \emph{quanta} (namely rational amounts that add up to the total weight in question) 
and placed onto various levels of a \emph{decanter} whose levels extend downward to infinity. 
Each quantum will be part of $\Omega_U(\{n\})$ for some $n\in\Nat$, and in this case we shall say
that it has label $n$. The weight in $\Omega_U(\{n\})[s_i]$ will, however,  be split into potentially many different
quanta, which are distributed at various levels of the decanter at stage $s_i$.
The basic rule for the creation of quanta corresponding to additional weight appearing in $\Omega_U$, is that
each quantum has a unique label, \ie it does not correspond to weight which is partly
in $\Omega_U(\{n\})$ for some $n$ and partly in $\Omega_U(\{m\})$ for some $m\neq n$.

At stage $s_0$, for each $n\in X_{s_0}$ such that 
$\Omega_U(\{n\})[s_0]>0$ we place a quantum of weight $\Omega_U(\{n\})[s_0]$ on the zero level
of the decanter, which we call $\CC_0$. Note that, by  standard conventions, for each 
$n$ such that $\Omega_U(\{n\})[s_0]>0$ we have $n<s_0$, so this is a finite and effective operation
and only finitely many quanta are added into $\CC_0$.
Each of the following stages $s_{i+1}$ consists of two operations (to be defined precisely later):
\begin{enumerate}[\hspace{0.5cm}(a)]
\item \emph{Purge} the quanta that have labels in $X_{s_{i}}-X_{s_{i+1}}$.
\item Split the weight that has come in into quanta, and distribute these quanta to the various levels of the decanter.
\end{enumerate}
These two operations will be carried out so as to  create the effect of quanta flowing down the levels of the decanter, possibly with some loss
of weight as they travel from each level to the one below. Although this is the most convenient way to
look at the construction, the reader should not forget that, in reality, when a quantum (or a part of it) flows
from one level to the next one, the quantum in the next level corresponds to different weight in $\Omega_U$
than the quantum at the previous level (and will have a different label). In other words, there is a resistance to
quanta flowing from level to level, due to the fact that each flow corresponds to additional weight in
$\Omega_U$. This is the crucial observation upon which we will build a \ml test for $\Omega_U(X)$.

\subsection{Construction of the decanter flow}
At each stage $s_{i+1}$ we consider the numbers in $X_{s_{i}}-X_{s_{i+1}}$ and {\em purge} all the
quanta in the decanter which have their label in $X_{s_{i}}-X_{s_{i+1}}$. According to the basic rule for the creation of quanta described above, each
quantum has a unique label so there is no ambiguity in this selection. Each of these quanta can be thought of as a pair $(n, \delta)$, where $n$ is the label and $\delta$ is the weight of the quantum. To say that a quantum sitting at level $i$ is purged, simply means that we no longer consider that quantum to sit at level $i$. 

The basic idea now is that we wish to place quanta corresponding to the weight that has come in during the interval $(s_i,s_{i+1}]$, into various levels of the decanter. We wish to do this in such a way that each quantum placed can be thought of replacing or \emph{flowing from} one of the quanta (or part of one of the quanta) just purged, being placed in the decanter one level below the quantum it flows from. The fact that the weight coming in is at most the weight going out means that the purged quanta will have sufficient weight for us to be able to carry out such a procedure. Let the weight that has come in be given as a finite list  $v_1,\dots,v_r$ where each $v_i$ is a pair $(n_i,\delta_i)$ such that $\delta_i\neq 0$ is the new weight that has appeared in $\Omega_U(\{n_i \})$. Similarly enumerate the quanta just purged as $u_1,\dots,u_l$ where each $u_i$ is a pair $(m_i,\rho_i)$ which has just been purged from level $\ell_i$. Define $k_1=1$ and $x_1=0$. We define and place the new quanta in $r$ steps. At step $t$ with $1\leq t \leq r$, we are given $k_t$ which is the least $i$ such that  $u_i$ is still available for use in creating new quanta, as well as $x_t$ which is the weight of $u_{k_t}$ which has already been used. We create new quanta corresponding to $v_t$ as follows. Redefine $\rho_{k_t}$ to be $\rho_{k_t}-x_t$ (to account for the weight from this quantum already used). Let $k_{t+1}$ be the least such that $\sum_{i=k_t}^{k_{t+1}} \rho_i \geq \delta_t$ and let $x_{t+1}$ be such that $(\sum_{i\in [k_t,k_{t+1})} \rho_i) +x_{t+1} =\delta_t$. For each $i\in [k_t,k_{t+1})$ place a quantum $(n_t,\rho_i)$ at level $\CC_{\ell_i+1}$ and say that this quantum \emph{flows from} $u_i$. Also place a quantum $(n_t,x_{t+1})$ at level $\CC_{\ell_{k_{t+1}}+1}$ and say that this quantum flows from $u_{k_{t+1}}$. 

Once step $r$ is completed and all of the new quanta have been created and placed at stage $s_{i+1}$, there remains a set of quanta just purged, which were not used in creating new quanta at this stage. More precisely, consider $k_{r+1}$ and $x_{r+1}$ (according to the notation above). Redefine $u_{k_{r+1}}=(m_{k_{r+1}},\rho_{k_{r+1}}-x_{r+1})$. We consider the quanta $u_{k_{r+1}},\dots, u_l$ to be \emph{evicted} at stage $s_{i+1}$. This concludes the description of the construction.

\subsection{Properties of the decanter flow}
As we explained above, the construction can be visualised as a dynamic flow of quanta through
a decanter with a  countably infinite number of levels $\CC_i$,
extending downwards and starting from level $\CC_0$.

\begin{lem}[Finiteness of throughput]\label{mOgXvfMWGJ}
The number of quanta ever enumerated into each level is finite.
\end{lem}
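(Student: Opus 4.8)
The number of quanta ever enumerated into each level $\CC_i$ is finite.

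I would argue by induction on the level $i$. The base case $i=0$ is immediate from the construction: quanta are placed on $\CC_0$ only at stage $s_0$, and only finitely many (one for each $n \in X_{s_0}$ with $\Omega_U(\{n\})[s_0]>0$, of which there are finitely many by the standard convention $n<s_0$). No later stage adds anything to $\CC_0$, since every quantum created at a stage $s_{i+1}$ is placed at a level $\CC_{\ell+1}$ for some level $\ell\ge 0$ from which it flows, hence at level $\ge 1$.

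For the inductive step, assume $\CC_i$ receives only finitely many quanta ever; I want to bound the number of quanta ever placed on $\CC_{i+1}$. The key point is that every quantum placed on $\CC_{i+1}$ is declared to \emph{flow from} some quantum $u$ that was just purged from level $\CC_i$. So I would first observe that each quantum $q$ on level $\CC_i$ can be the source of quanta flowing to $\CC_{i+1}$ only during the single stage at which $q$ is purged (after that, $q$ is no longer on level $\CC_i$, and if the same weight reappears it does so as a fresh quantum with a possibly different label — but in any case it would have to be re-placed on $\CC_i$ first, which by the inductive hypothesis happens only finitely often). Hence the quanta on $\CC_{i+1}$ are partitioned according to which quantum of $\CC_i$ they flow from, and there are only finitely many such source quanta by induction. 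It then remains to check that a \emph{single} purged quantum $u$ on $\CC_i$ spawns only finitely many quanta on $\CC_{i+1}$ at the stage it is purged. Inspecting the construction: at a purge stage $s_{i+1}$, the new weight is a finite list $v_1,\dots,v_r$, and each purged quantum $u_{k}$ contributes at most one sub-quantum $(n_t,\rho_i)$ or $(n_t,x_{t+1})$ per index $t\in\{1,\dots,r\}$ for which it is used. Since $r$ is finite (the weight coming in during $(s_i,s_{i+1}]$ is split into finitely many pairs), $u_k$ spawns at most $r$ quanta at that stage — in fact at most two, since the $k_t$ are increasing, but finiteness is all that is needed.

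Putting these together: the total number of quanta ever placed on $\CC_{i+1}$ is at most $\sum_{q} (\text{number spawned by } q)$, where the sum ranges over the finitely many quanta ever on $\CC_i$, and each summand is finite; hence the total is finite. I expect the main subtlety to be the bookkeeping around re-use and initialisation of quanta — i.e. making precise that "a quantum on $\CC_i$ is purged at most once" in the face of the fact that the same $\Omega_U$-weight may flow through the decanter repeatedly. The clean way to handle this is to treat a \emph{quantum} as the specific object created-and-placed at a specific stage (carrying its own label), rather than as an amount of $\Omega_U$-weight; with that convention each quantum occupies one level, is purged at most once, and the counting argument above goes through directly by induction on $i$.
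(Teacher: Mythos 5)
Your proof is correct and follows essentially the same route as the paper's: induction on the level, using that each quantum entering $\CC_{i+1}$ flows from a quantum just purged from $\CC_i$, that each quantum is purged at most once, and that only finitely many quanta are created per stage. (The paper phrases the inductive step as ``eventually no further purges occur at levels below $m$, hence no further enumerations into $\CC_m$'' rather than as a direct count, and your parenthetical ``at most two'' per source quantum is not quite right---a large purged quantum can be drawn on at every step $t$ of a single stage---but ``at most $r$'' is correct and suffices.)
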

\begin{proof}
We use induction on the levels of the decanter. By construction at each stage only finitely
many quanta are enumerated into each level. In fact,  no enumerations into
$\CC_0$ occur except at stage $s_0$. 
Inductively assume that there is a stage $u_{0}$ after which no enumeration occurs in
$\CC_{i}, i<m$. After stage $u_{0}$ there are only finitely many stages where some quantum
in $\CC_{i}, i<m$ is purged. So there exists a stage $u_{1}>u_0$ after which no
quantum in $\CC_{i}, i<m$ is purged. By construction, this means that 
no enumeration of quanta into $\CC_m$ will occur after stage $u_1$. This concludes the induction step and the proof of the lemma.
\end{proof}

\begin{lem}[Weight bound of total throughput]\label{omQJNhsB2D}
Let $k\in\Nat$. The sum of the weights of all quanta ever enumerated into $\CC_k$, is less than $1/(k+1)$.
\end{lem}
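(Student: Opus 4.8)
The plan is to bound $W_k$, the total weight of all quanta ever enumerated into $\CC_k$, by comparing the decanter flow with the measure that ever enters $\Omega_U$. The point is the ``resistance'' highlighted in the description of the construction: whenever a quantum is passed from a level to the one below it, the quantum placed below carries a fresh label, hence represents a chunk of weight that has genuinely newly appeared in $\Omega_U$.

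First I would establish two facts. \textbf{(i) The sequence $W_0,W_1,W_2,\dots$ is non-increasing.} Every quantum ever enumerated into $\CC_{j+1}$ is declared to \emph{flow from} some quantum just purged from $\CC_j$, and the total weight of all quanta flowing from a single purged quantum is at most the weight of that quantum — this is precisely what the matching with the remainders $x_t$ arranges, a purged quantum being consumed exactly except for a possible deficit when it is eventually evicted. So the total weight ever enumerated into $\CC_{j+1}$ is at most the total weight ever purged from $\CC_j$, and hence at most $W_j$. \textbf{(ii) $\sum_{j\ge 0}W_j\le\Omega_U$.} The quanta on $\CC_0$ are exactly the quanta partitioning $\Omega_U(X)[s_0]$, and at each later stage $s_{i+1}$ the quanta created at that stage partition the weight that comes in during $(s_i,s_{i+1}]$, which by the single-label rule is a chunk of $\Omega_U$ disjoint from everything accounted for earlier. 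Weight entering $\Omega_U$ at distinct stages is disjoint, and all of it is disjoint from the chunk represented on $\CC_0$; summing the weights of all of these pairwise-disjoint chunks of $\Omega_U$ gives at most $\Omega_U$.

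Combining, for each fixed $k$,
\[
(k+1)\,W_k\le W_0+W_1+\cdots+W_k\le\sum_{j\ge 0}W_j\le\Omega_U<1,
\]
the first inequality because $W_k\le W_j$ for every $j\le k$ by (i), and the last because $\Omega_U<1$ for the universal \pf machine $U$. Hence $W_k<1/(k+1)$, which is the claim (and in particular this recovers the base case $W_0<1$).

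The step I expect to be the main obstacle is the careful verification of (i) against the construction: one has to follow the ``flows from'' bookkeeping through the $r$ steps of the distribution phase, in particular handling the boundary quanta $u_{k_{t+1}}$ that are split between two consecutive incoming items $v_t$ and $v_{t+1}$, and the quanta that are only partially consumed and then evicted, so as to be sure that no purged quantum is ever charged more weight than it carries, and that conversely every quantum at a level below $\CC_0$ is charged to some purged quantum one level above. The disjointness bookkeeping behind (ii) is then routine, given the ``unique label'' convention and the definition of the weight coming in at $s_{i+1}$ as the increment of $\Omega_U$ over $(s_i,s_{i+1}]$ restricted to $X_{s_{i+1}}$.
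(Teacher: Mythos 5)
Your proof is correct and is essentially the paper's own argument: the paper likewise combines the facts that each quantum entering $\CC_{j+1}$ flows from a purged quantum of equal weight in $\CC_j$ (so the $W_j$ are non-increasing) with the pairwise disjointness in $\Omega_U$ of the weight entering distinct levels, to get $(k+1)W_k\le W_0+\cdots+W_k\le\Omega_U<1$. The only difference is presentational — the paper works out the cases $k=0,1,2$ and says ``in the same way,'' whereas you state the general induction-free inequality explicitly.
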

\begin{proof}
This is trivially true for $k=0$, because $\Omega_U<1$. For $k=1$ note that 
every quantum that is enumerated into $\CC_1$ at some stage $s_{i+1}$ 
flows from part of a quantum 
in $\CC_0$ of equal weight that has just been purged during the first step of stage $s_{i+1}$.
Moreover the weight in $\Omega_U$ that enters $\CC_0$ is disjoint from the weight that enters $\CC_1$ (i.e.\ corresponds to different convergent computations for $U$).
This means that the total weight entering $\CC_1$ can be at most $\Omega_U/2<\frac{1}{2}$. Similarly, the quanta entering $\CC_2$ flow from quanta of equal weight in $\CC_1$, but once again the weight in $\Omega_U$ enumerated into $\CC_2$ is disjoint from the weight in $\Omega_U$  enumerated into $\CC_0$ and $\CC_1$.  So if $q$ is the total weight of quanta enumerated into  $\CC_2$, then 
$3q\leq \Omega_U$, which means that $q<1/3$. In the same way, for each $k$,  
the weight of the quanta enumerated into $\CC_k$  is less than $1/(k+1)$.
\end{proof}

\subsection{The \ml test}
The basic idea is that the $i$th member $U_i$ of the \ml test will be defined according to the
enumeration of quanta into $\CC_i$. The upper bound of Lemma \ref{omQJNhsB2D}
will be instrumental in giving an upper bound for the measure of $U_i$. The enumeration of $U_i$ 
can be split into different cycles, such that during each cycle there is no eviction 
 of quanta from $\CC_j, j<i$. Upon such an eviction  we regard
the current cycle  as {\em injured}, we terminate it (preventing any further enumeration of 
intervals into $U_i$ this very stage) and then we begin a new cycle
at the next stage. By Lemma \ref{mOgXvfMWGJ} there will be finitely many injuries of $U_i$, and the enumeration of $U_i$ will have a final cycle which runs indefinitely. When counting the measure of reals  enumerated
into $U_i$, of course we also need to consider the reals  enumerated into $U_i$ during 
cycles that are later injured. Each cycle of $U_i$ has a parameter $\delta_i$ which is the length of the
intervals enumerated into $U_i$ in response to the approximation $(\omega_s)$.
For the $k$-th cycle of $U_i$ we define $\delta_i=2^{-i-k-2}$. A stage is called $i$-valid if no
eviction of quanta occurs in $\CC_j, j<i$. 

\paragraph{\bf Construction of the \ml test.}
At each $i$-valid stage $s$ 
we enumerate the interval $(\omega_s-\delta_i[s],\omega_s)$ into $U_i$. At each stage which is not  $i$-valid, 
we say that $U_i$ is injured, we redefine $\delta_i:=\delta_i/2$ and go to the next stage.
This completes the definition of $U_i$.

\paragraph{\bf Properties of the \ml test.}
Note that the intervals enumerated into $U_i$ are typically overlapping. Since $U_i$ is injured only finitely many times, $\delta_i$ reaches a limit, meaning that $\Omega_U(X)$ belongs to co-finitely many of the intervals enumerated
into $U_i$. It remains to obtain an appropriate upper bound for the measure of $U_i$.
Let $U_i(k)$ consist of the intervals enumerated into $U_i$ during its $k$th cycle.
Then the measure of $U_i(k)$ is the measure of the initial interval, which is $2^{-i-k-2}$, plus
the weight of all the quanta evicted during the stages of the $k$th cycle. Note that
these are all quanta that have passed through $\CC_i$, and which cannot be evicted again at later stages or during later cycles (a quantum can only be evicted once, and once evicted no quanta flow from it). 
By Lemma \ref{omQJNhsB2D}
the total weight of these evicted quanta over all cycles is bounded by $1/(i+1)$.  We can conclude that the weight of $U_i$
is bounded above by: 
\[
\sum_j 2^{-i-j-2} + 1/(i+1)<2/(i+1).
\]
Defining $V_i=U_{2^{i+1}}$ we have that $(V_i)$ is a \ml test such that $\cap_i V_i$ 
contains $\Omega_U(X)$.
So $\Omega_U(X)$ is not \ml random, which concludes the proof of Lemma \ref{wd9LMyiLha}.

\end{document}